\newtheorem{lem}{Lemma}
\newtheorem{prop}{Proposition}
\newtheorem{thm}{Theorem}
\newtheorem{definition}{Definition}
\begin{document}

\title{CaSCADE: Compressed Carrier and DOA Estimation}

\author{\thanks{
This project has received funding from the European Union's Horizon 2020 research and innovation program under grant agreement No. 646804-ERC-COG-BNYQ, and from the Israel Science Foundation under Grant no. 335/14. Deborah Cohen is grateful to the Azrieli Foundation for the award of an Azrieli Fellowship.}
Shahar Stein, Or Yair, Deborah Cohen, \emph{Student
IEEE} and Yonina C. Eldar, \emph{Fellow IEEE} }

\maketitle

\IEEEpeerreviewmaketitle{}

\begin{abstract}
Spectrum sensing and direction of arrival (DOA) estimation have been thoroughly investigated, both separately and as a joint task. Estimating the support of a set of signals and their DOAs is crucial to many signal processing applications, such as Cognitive Radio (CR). A challenging scenario, faced by CRs, is that of multiband signals, composed of several narrowband transmissions spread over a wide spectrum each with unknown carrier frequencies and DOAs. The Nyquist rate of such signals is high and constitutes a bottleneck both in the analog and digital domains. To alleviate the sampling rate issue, several sub-Nyquist sampling methods, such as multicoset sampling or the modulated wideband converter (MWC), have been proposed in the context of spectrum sensing. In this work, we first suggest an alternative sub-Nyquist sampling and signal reconstruction method to the MWC, based on a uniform linear array (ULA). We then extend our approach to joint spectrum sensing and DOA estimation and propose the CompreSsed CArrier and DOA Estimation (CaSCADE) system, composed of an L-shaped array with two ULAs. In both cases, we derive perfect recovery conditions of the signal parameters (carrier frequencies and DOAs if relevant) and the signal itself and provide two reconstruction algorithms, one based on the ESPRIT method and the second on compressed sensing techniques. Both our joint carriers and DOAs recovery algorithms overcome the well-known pairing issue between the two parameters. Simulations demonstrate that our alternative spectrum sensing system outperforms the MWC in terms of recovery error and design complexity and show joint carrier frequencies and DOAs from our CaSCADE system's sub-Nyquist samples.
\end{abstract}

\section{Introduction}
Both traditional tasks of spectrum sensing and direction of arrival (DOA) estimation have been thoroughly investigated in the literature. For the first, several sensing schemes have been proposed, such as energy detection \cite{Urkowitz_energy}, matched filter \cite{MF1, MF2} and cyclostationary detection \cite{Gardner_review, Napo_review}, assuming known or identical DOAs. Well known techniques for DOA estimation include MUSIC \cite{pisarenko73, schmidt86} and ESPRIT \cite{PAUL1986}. Here, the signal frequency support is typically known. However, many signal processing applications may require or at least benefit from the two combined, namely joint spectrum sensing and DOA estimation.

Cognitive Radio (CR) \cite{Mitola, MitolaMag} is one such application, which aims at solving the spectrum scarcity issue by exploiting its sparsity. Spectral resources, traditionally allocated to licensed or primary users (PUs) by governmental organizations, are becoming critically scant but at the same time have been shown to be underutilized \cite{Study1, Study2}. These observations led to the idea of CR, which allows secondary users to opportunistically access the licensed frequency bands left vacant by their primary owners increasing spectral efficiency \cite{Mitola, Haykin}. Spectrum sensing is an essential task in the CR cycle \cite{cog, MagazineMishali}. Indeed, a CR should be able to constantly monitor the spectrum and detect the PUs' activity, reliably and fast \cite{cognitive1,cognitive2}. DOA recovery can enhance CR performance by allowing exploitation of vacant bands in space in addition to the frequency domain.

The 2D-DOA problem, which requires finding two unknown angles for each transmission and pairing them, is considered in \cite{Jgu07, Jgu15}. The authors suggest a modification to the traditional ESPRIT \cite{PAUL1986}, which is used to estimate a single angle. However, this approach only allows the recovery of two angles and solves a separable problem. This cannot be directly extended to joint angle and frequency estimation, which is not separable. Joint DOA and carrier frequency estimation has been considered in \cite{vanderveen1,vanderveen2}, where the authors developed a joint angle-frequency estimation (JAFE) algorithm. JAFE is based on an extension of ESPRIT which allows for multiple parameters to be recovered. However, this method requires additional joint diagonalization of two matrices using iterative algorithms to pair between the carrier frequencies and the DOAs of the different transmissions. In \cite{bai}, the authors consider multiple interleaved sampling channels, with a fixed delay between consecutive channels. They propose a two-stage reconstruction method, where first the frequencies are recovered and then the DOAs are computed from the corresponding estimated carriers. The works described above all assume that the signal is sampled at least at its Nyquist rate, and do not consider signal reconstruction.

Many modern applications deal with signals with high bandwidth and consequently high Nyquist rate. For instance, to increase the chance of finding unoccupied spectral bands, CRs have to sense a wide spectrum, leading to prohibitively high Nyquist rates. Moreover, such high sampling rates generate a large number of samples to process, affecting speed and power consumption. To overcome the rate bottleneck, several new sampling methods have recently been proposed \cite{Mishali09, Mishali10, MagazineMishali} that reduce the sampling rate in multiband settings below the Nyquist rate.

The multicoset or interleaved approach adopted in \cite{Mishali09} suffers from practical issues, as described in \cite{Mishali10}. Specifically, the signal bandwidth can exceed the analog bandwidth of the low rate analog-to-digital converter (ADC) by orders of magnitude. Another practical issue stems from the time shift elements since it can be difficult to maintain accurate time delays between the ADCs at such high rates. The modulated wideband converter (MWC) \cite{Mishali10} was designed to overcome these issues. It consists of an analog front-end composed of several channels. In each channel, the analog wideband signal is mixed by a periodic function, low-pass filtered and sampled at a low rate.
The MWC solves carrier frequency estimation and spectrum sensing from sub-Nyquist samples, but does not address DOA recovery.

A few works have recently considered joint DOA and spectrum sensing of multiband signals from sub-Nyquist samples. In \cite{leus_DOA}, the authors consider both time and spatial compression by selecting receivers from a uniform linear array (ULA) and samples from the Nyquist grid. They exploit a mathematical relation between sub-Nyquist and Nyquist samples over a certain sensing time and recover the signal's power spectrum from the compressed samples. The frequency support and DOAs are then estimated by identifying peaks of the power spectrum, corresponding to each one of the uncorrelated transmissions. Since the power spectrum is computed over a finite sensing time, the frequency supports and angles are obtained on a grid defined by the number of samples. In \cite{kumar_doa}, an L-shaped array with two interleaved (or multicoset) channels, with a fixed delay between the two, samples the signal below the Nyquist rate. Then, the carrier frequencies and the DOAs are recovered from the samples. However, the pairing issue between the two is not discussed. Moreover, this delay-based approach suffers from the same drawbacks as the multicoset sampling scheme when it comes to practical implementation.

In this work, we first consider spectrum sensing of a multiband signal whose transmissions are assumed to have known or identical DOAs, as in \cite{Mishali10}. For this scenario, we present an alternative sub-Nyquist sampling scheme based on a ULA of sensors. We then extend this scheme to the scenario where both the carrier frequencies and DOAs of the transmissions composing the input signal are unknown. In this case, we propose the CompreSsed CArrier and DOA Estimation (CaSCADE) system, composed of an L-shaped array, and perform joint DOA and carrier recovery from sub-Nyquist samples. 

In the first scenario, we consider a ULA where each sensor implements one channel of the MWC. This configuration has two main advantages over the MWC. First, it allows for a simpler design of the mixing functions which can be identical in all sensors. Second, the ULA based system outperforms the MWC in low signal to noise ratio (SNR) regimes. Since all the MWC channels belong to the same sensor, they are all affected by the same additive sensor noise. In the ULA based system, each channel has a different sensor with uncorrelated sensor noise between channels. This allows for noise averaging which increases the SNR.

We present two approaches to recover the carrier frequencies of the transmissions composing the input signal. The first method is based on compressed sensing (CS) \cite{CSBook} algorithms and assumes that the carriers lie on a predefined grid. In the second technique, we drop the grid assumption and use the ESPRIT algorithm \cite{PAUL1986} to estimate the frequencies. Once these are recovered, we show how the signal itself can be reconstructed. We demonstrate that the minimal number of sensors required for perfect reconstruction in noiseless settings is identical for both recovery approaches and that our system achieves the minimal sampling rate derived in \cite{Mishali09}.

Next, we extend our approach to joint spectrum sensing and DOA estimation from sub-Nyquist samples, using CaSCADE implementing the modified MWC over an L-shape array. Specifically, we consider several narrowband transmissions spread over a wide spectrum, impinging on an L-shaped ULA, each from a different direction. The array sensors are composed of an analog mixing front-end, implementing one channel of the MWC \cite{Mishali10}, as before. We then propose two approaches to jointly recover the carrier frequencies and DOAs of the transmissions. The first is based on CS techniques and allows recovery of both parameters assuming they lie on a predefined grid. The CS problem is formulated in such a way that no pairing issue arises between the carrier frequencies and their corresponding DOAs. 
The second approach, inspired by \cite{Jgu07, Jgu15}, extends the ESPRIT algorithm to the joint estimation of carriers and DOAs, while overcoming the pairing issue. Our 2D-ESPRIT algorithm can be applied to sub-Nyquist samples, as opposed to previous work which only considered the Nyquist regime.

Once the carriers and DOAs are recovered, the signal itself is reconstructed, similarly to the previous scenario. We provide sufficient conditions on our sampling system for perfect reconstruction of the carriers and DOAs, and of the signal itself. We compare our reconstruction algorithms to the Parallel Factor (PARAFAC) analysis method \cite{Hars94}, previously proposed for the 2D-DOA problem \cite{Zhang2011}, \cite{Liu2010}. This approach solves the pairing issue between two estimated angles. However, it has only been applied in the Nyquist regime so far. In \cite{Stein2015}, we applied it on sub-Nyquist samples and extended it to the case where the second variable is a frequency rather than an additional angle. Last, for each scenario, we derive the minimal sampling rate allowing for perfect reconstruction of the signal parameters and the signal itself in noiseless settings.

This paper is organized as follows. In Section~\ref{sec:model}, we formulate the signal model and spectrum sensing goal. Section~\ref{sec:samp_rec} presents the ULA-based sub-Nyquist sampling and reconstruction schemes. Numerical experiments for the spectrum sensing scenario, including comparison with the MWC system, are shown in Section~\ref{sec:exp}. The joint spectrum sensing and DOA estimation problem is considered in Section~\ref{sec:joint}. We present the CaSCADE system along with its sampling scheme and reconstruction techniques, and illustrate its performance in simulations.

\section{Spectrum Sensing Problem Formulation}
\label{sec:model}

\subsection{Signal Model \label{sub:Signal-Model}}

Let $u\left(t\right)$ be a complex-valued continuous-time signal,
bandlimited to $\mathcal{F}=\left[-\frac{f_{\text{Nyq}}}{2},\frac{f_{\text{Nyq}}}{2}\right]$
and composed of up to $M$ uncorrelated transmissions $s_{i}\left(t\right),\,i\in\left\{ 1,2,...,M\right\} $.
Each transmission $s_{i}\left(t\right)$ is modulated by a carrier
frequency $f_{i}\in\mathbb{R}$, such that
\begin{equation}
u\left(t\right) = \sum_{i=1}^{M}s_{i}\left(t\right)e^{j2\pi f_{i}t}.
\end{equation}
Assume that $s_{i}\left(t\right)$ are bandlimited to $\mathcal{B}=\left[-\nicefrac{1}{2T},\nicefrac{1}{2T}\right]$ and disjoint, namely $\min_{i\neq j}\left\{ \left|f_{i}-f_{j}\right|\right\} >B$,
where $B=|\mathcal{B}|$. Formally, the Fourier transform of $u(t)$,
defined by
\begin{equation}
U(f)=\intop_{-\infty}^{\infty}u(t)e^{-j2\pi ft}dt=\sum_{i=1}^{M}S_{i}(f-f_{i}),
\end{equation}
where $S_{i}\left(f\right)$ is the Fourier transform of $s_{i}\left(t\right)$,
is zero for every $f\notin\mathcal{F}$. All source signals are assumed
to have identical and known angle of arrival (AOA) $\theta\neq90^{\circ}$.
A typical source signal $u(t)$ is depicted in the frequency domain
in Fig. \ref{signals figure}(a).

\begin{definition} The set $\mathcal{M}_{1}$ contains all signals
$u\left(t\right)$, such that the support of the Fourier transform
$U\left(f\right)$ is contained within a union of $M$ disjoint intervals
in $\mathcal{F}$. Each of the bandwidths does not exceed $B$ and
all the transmissions composing $u\left(t\right)$ have identical
and known AOA $\theta\neq90^{\circ}$. \end{definition}

We wish to design a sampling and reconstruction system for signals
from the model $\mathcal{M}_{1}$ which satisfies the following properties:
\begin{enumerate}
\item The system has no prior knowledge on the carrier frequencies.
\item The sampling rate should be as low as possible.
\end{enumerate}
Let $\mathbf{s}(t)=\left[s_{1}(t),s_{2}(t),\cdots,s_{M}(t)\right]^{T}$
be the source signals vector, $\mathbf{S}(f)=\left[S_{1}(f),S_{2}(f),\cdots,S_{M}(f)\right]^{T}$
the signal Fourier transform vector, and $\boldsymbol{f}=\left[f_{1},f_{2},\cdots,f_{M}\right]^{T}$
the carrier frequencies vector. Our goal is to design a sampling and
reconstruction system in order to recover $\boldsymbol{f}$ and $\mathbf{s}(t)$
from sub-Nyquist samples of $u(t)$. In the reconstruction phase,
we will address two separate objectives:
\begin{enumerate}
\item Frequencies recovery, i.e. recovering only the signals carrier frequencies
$\boldsymbol{f}$.
\item Full spectrum recovery, i.e. recovering both the signals carrier frequencies
$\boldsymbol{f}$ and the source signals $\mathbf{s}(t)$.
\end{enumerate}

\subsection{Multicoset Sampling and the MWC}

It was previously shown in \cite{Mishali09}, that if $MB<\frac{f_{\text{Nyq}}}{2}$,
then the minimal sampling rate to allow blind reconstruction of $u\left(t\right)$
is $2MB$, namely twice the Landau rate \cite{Landau67}. Concrete
algorithms for blind recovery achieving the minimal rate were developed
in \cite{Mishali09} based on multicoset sampling and in \cite{Mishali10}
based on the MWC. Unfortunately, the implementation of multicoset
sampling is problematic due to the inherent analog bandwidth of
the ADCs and the required synchronization between time shift elements \cite{Mishali10}.

The MWC achieves the minimal sampling rate and can be
implemented in practice \cite{Mishali10}. This system is composed of $N$
parallel channels. Each channel consists of an analog mixing front-end in which $u(t)$ is multiplied by a periodic mixing function $p_{n}(t), 1 \leq n \leq N$. This multiplication aliases the
spectrum, such that each spectral band appears in baseband. We denote by $T_{p}$
the period of $p_n(t)$ and require $f_{p}=1/T_{p}\ge B$. The signal then
goes through a low-pass filter (LPF) with cut-off frequency $f_{s}/2$
and is sampled at rate $f_{s} \geq f_p$. Finally, $u(t)$ is reconstructed
from the low rate samples using CS techniques. An illustration of the MWC is shown in Fig.~\ref{fig:mwc}.

A known difficulty of the MWC is choosing appropriate periodic
functions $p_{n}(t)$ so that their Fourier coefficients fulfill CS requirements. In this work, we suggest an alternative implementation of the MWC, based on a ULA, which overcomes this difficulty, and satisfies the properties described above. Besides, our ULA based system, shown in Fig.~\ref{ULA fig}, is more robust to noise, as we will explain in Section~\ref{sec:exp} and demonstrate via simulations. In Section \ref{sec:joint}, we show how to use this system for DOA recovery.

\begin{figure}
\begin{centering}
\fbox{%
\includegraphics[width = 0.48\textwidth]{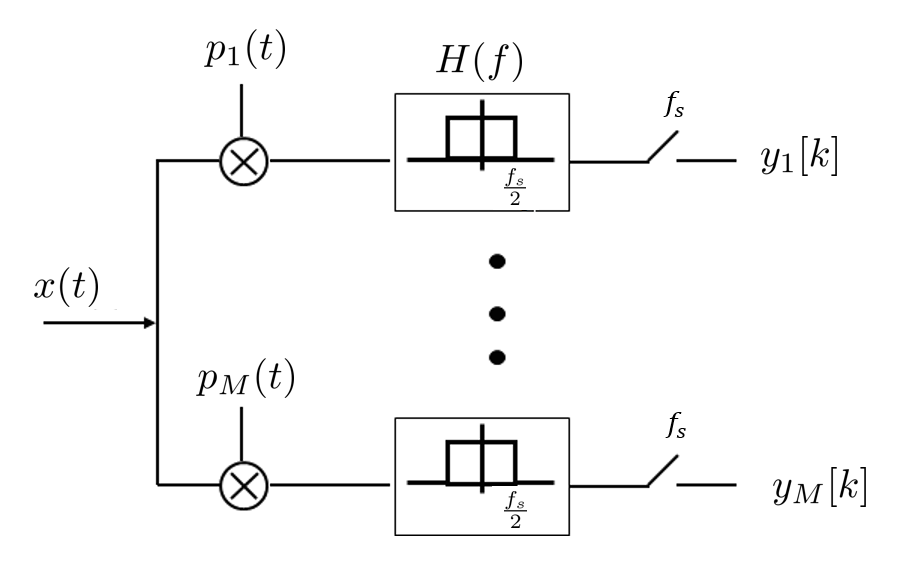}
}
\end{centering}

\protect\caption{MWC system. \label{fig:mwc}}
\end{figure}

\begin{figure*}
\begin{centering}
\fbox{%
\begin{minipage}[t]{1.5\columnwidth}%
\begin{center}
\includegraphics[width = 0.8\textwidth]{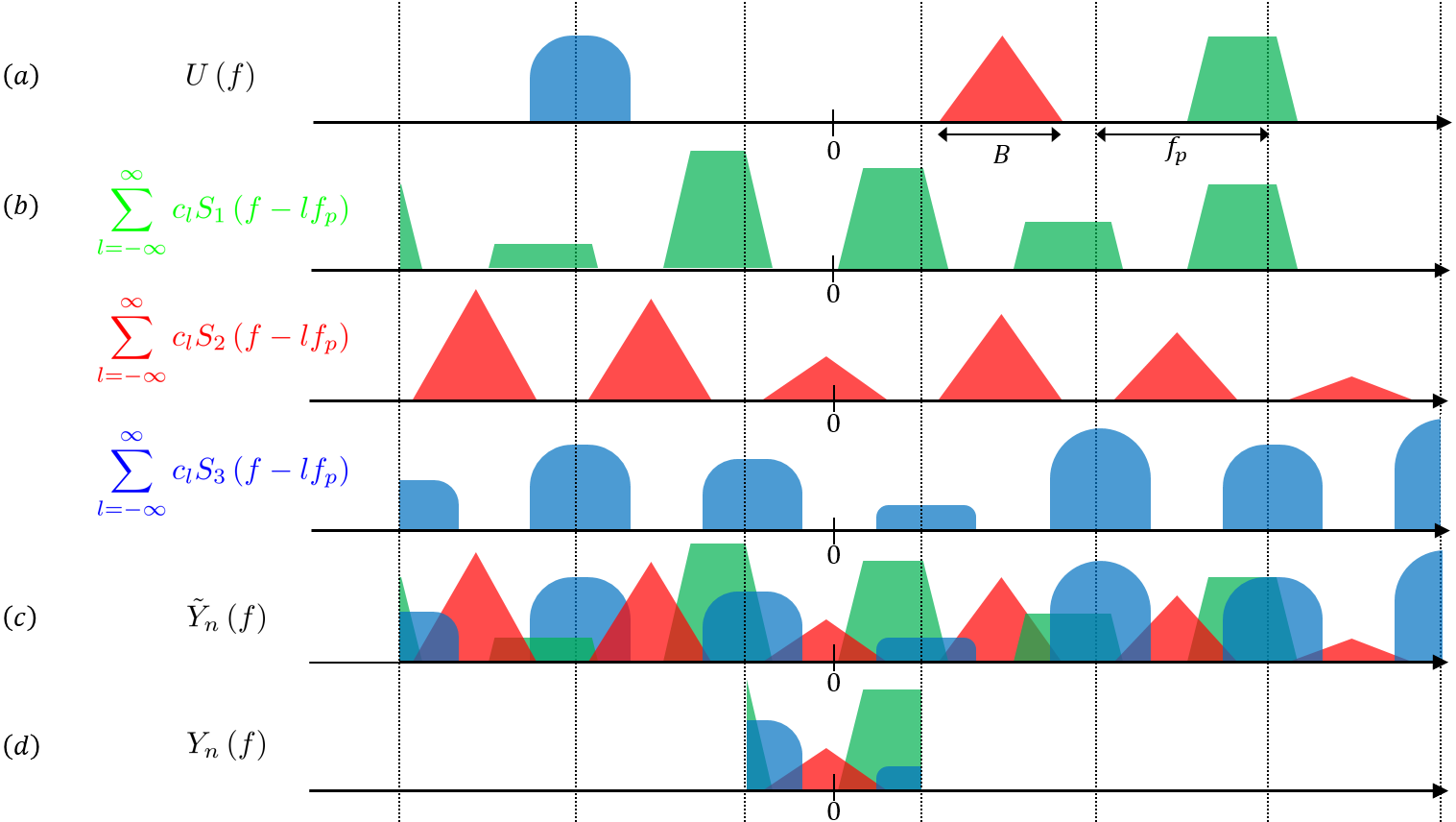}
\par\end{center}%
\end{minipage}}
\par\end{centering}

\protect\caption{The different stages of the analog mixing front-end at the $n$th sensor.
$\left(a\right)$ The input signal in the frequency domain $U\left(f\right)$
with $M=3$ different source signals. $\left(b\right)$ Each replicated
source signal (after mixing). $\left(c\right)$ Replicated input signal
$\tilde{Y}_n\left(f\right)$ (after mixing). $\left(d\right)$ Baseband
signal $Y_n\left(f\right)$ after LPF.\label{signals figure}}
\end{figure*}

\begin{table}
\centering
\protect\caption{Notation}

{\tiny{}{}}%
\begin{tabular}{|c|c|}
\hline
{\tiny{}{}Symbol}  & {\tiny{}{}Interpretation}  \tabularnewline
\hline
\hline
{\tiny{}{}$A$$\left(f\right)$}  & {\tiny{}{}Fourier Transform of $a(t)$}  \tabularnewline
\hline
{\tiny{}{}$A$$\left(e^{j2\pi fT}\right)$}  & {\tiny{}{}DTFT of $a[n]$} \tabularnewline
\hline
{\tiny{}{}$\mathbf{a}$, $\mathbf{A}$}  & {\tiny{}{}vector, matrix (capital letter)} \tabularnewline
\hline
{\tiny{}{}$c$}  & {\tiny{}{}the speed of light}   \tabularnewline
\hline
{\tiny{}{}$\angle\left(\cdot\right)$}  & {\tiny{}{}the angle of $\left(\cdot\right)$, $\angle\left(\cdot\right)\in(-\pi,\pi]$}   \tabularnewline
\hline
$\mathbf{A}^{H}$ & {\tiny{}{}the conjugate-transpose (Hermitian) of $\mathbf{A}$} \tabularnewline
\hline
$\mathbf{A}^{\dagger}$ & {\tiny{}{}the (Moore-Penrose) pseudoinverse of $\mathbf{A}$,
i.e. $\mathbf{A}^{\dagger}=\left(\mathbf{A}^{H}\mathbf{A}\right)^{-1}\mathbf{A}^{H}$}  \tabularnewline
\hline
\end{tabular}
\end{table}

\section{ULA Based MWC}
\label{sec:samp_rec}

\subsection{System Description \label{sec:Sampling-Scheme}}

Our sensing system consists of a ULA composed of $N$ sensors, with two adjacent sensors separated
by a distance $d$, such that $d<\frac{c}{|\cos(\theta)|f_{\text{Nyq}}}$,
where $c$ is the speed of light. All sensors have the same sampling
pattern implementing a single channel of the MWC; the received signal
is multiplied by a periodic function $p(t)$ with period $T_{p}=1/f_{p}$,
low-pass filtered with a filter that has cut-off frequency $f_{s}/2$
and sampled at the low rate $f_{s}$. For simplicity, we choose $f_{s}=f_{p}$.
The system is illustrated in Fig.~\ref{ULA fig}. The only requirement
on $p(t)$ is that none of its Fourier series coefficients within
the signal's Nyquist bandwidth are zero.

In the next section, we show
how we can recover both the carrier frequencies $\boldsymbol{f}$
and $\mathbf{s}(t)$, or alternatively the signal itself,
from the samples at the output of Fig.~\ref{ULA fig}. We demonstrate that the minimal number
of sensors required by both our reconstruction methods is $N=2M$,
with each sensor sampling at the minimal rate of $f_{s}=B$ to allow
for perfect signal recovery. This leads to a minimal sampling rate
of $2MB$, as shown in \cite{Mishali09}, which is assumed to be less than $f_{\text{Nyq}}$. With high probability, the minimal number of sensors reduces to $M+1$.

\begin{figure*}
\centering{}%
\fbox{%
\begin{minipage}[t]{1.5\columnwidth}%
\begin{center}
$$\xymatrix@C=2pc@R=0.7pc{
& \ar@<1ex>[dr]\ar@<-1ex>@{.>}[dr]^<<[@!-41]{s_{1,...,M}}\ar@{-->}[dr] \\
\ar@<1ex>[ddrr]\ar@<-1ex>@{.>}[ddrr]^<[@!+47]{\overset{\overset{\text{wave}}{\text{front}} }{........................}}\ar@{-->}[ddrr] & & *+[o][F]{\sum}\ar[r]\ar@{<.>}[dd]|{d} & \overset{\underset{\downarrow}{p(t)}}{\bigotimes}\ar[r] & *+[F]{\underset{f_s}{LPF}}]\ar[r]_{f_s}|{\underset{}{\nswfilledspoon}} & y_1[k] & \text{ sensor 1} \\
\\
& & *+[o][F]{\sum}\ar[r] & \overset{\underset{\downarrow}{p(t)}}{\bigotimes}\ar[r] & *+[F]{\underset{f_s}{LPF}}]\ar[r]_{f_s}|{\underset{}{\nswfilledspoon}} & y_2[k] & \text{ sensor 2} \\
& \ar@<1ex>[dr]\ar@<-1ex>@{.>}[dr]^<<[@!-43]{s_{1,...,M}}\ar@{-->}[dr] & \vdots & & & \vdots\\
& & *+[o][F]{\sum}\ar[r] & \overset{\underset{\downarrow}{p(t)}}{\bigotimes}\ar[r] & *+[F]{\underset{f_s}{LPF}}]\ar[r]_{f_s}|{\underset{}{\nswfilledspoon}} & y_N[k] & \text{ sensor N} \\
}$$
\par\end{center}%
\end{minipage}}\protect\caption{ULA configuration with $N$ sensors, with distance $d$ between two
adjacent sensors. Each sensor includes an analog front-end composed
of a mixer with the same periodic function $p\left(t\right)$, a LPF
and a sampler, at rate $f_s$.\label{ULA fig} }
\end{figure*}
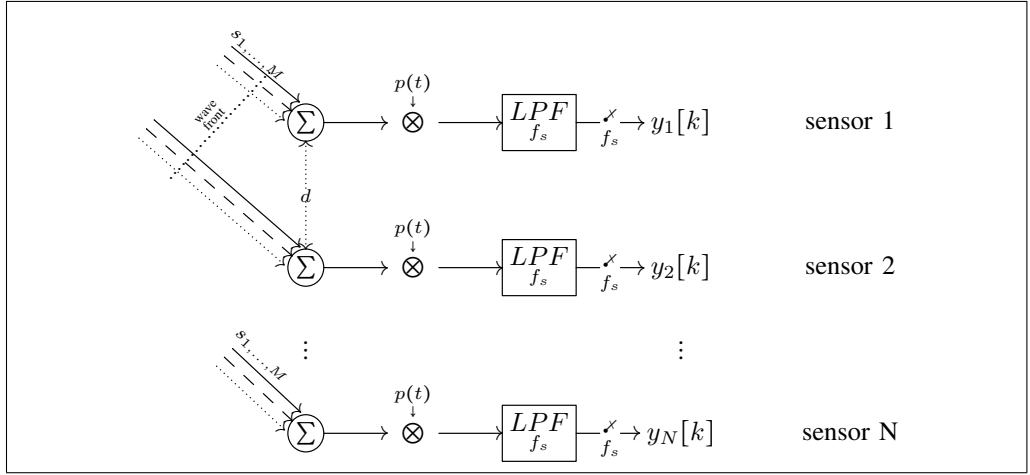

In the remainder of this section, we describe our ULA based sampling scheme and derive
conditions for perfect recovery of the carrier frequencies $\boldsymbol{f}$
and the transmissions $\mathbf{s}(t)$. We then provide concrete recovery
algorithms.

\subsection{Frequency Domain Analysis}
\label{sec:der1}

We start by deriving the relation between the sample sequences from the $n$th sensor and the unknown transmissions $s_{i}(t)$ and
corresponding carrier frequencies $\boldsymbol{f}$. To this end,
we introduce the following definitions
\begin{equation}
\mathcal{F}_{p}\triangleq\left[-\nicefrac{f_{p}}{2},\nicefrac{f_{p}}{2}\right],\quad\mathcal{F}_{s}\triangleq\left[-\nicefrac{f_{s}}{2},\nicefrac{f_{s}}{2}\right].
\end{equation}

Consider the received signal $u_{n}\left(t\right)$ at the $n$th
sensor of the ULA
\begin{equation}
u_{n}(t)=\sum_{i=1}^{M}s_{i}(t+\tau_{n})e^{j2\pi f_{i}(t+\tau_{n})}\approx\sum_{i=1}^{M}s_{i}(t)e^{j2\pi f_{i}(t+\tau_{n})},\label{eq:u_n(t)}
\end{equation}
where 
\begin{equation} \label{eq:delay}
\tau_{n}=\frac{dn}{c}\cos\left(\theta\right)
\end{equation}
is the accumulated
phase at the $n$th sensor with respect to the first sensor. The approximation
in (\ref{eq:u_n(t)}) stems from the narrowband assumption on the
transmissions $s_{i}\left(t\right)$. The Fourier transform of the
received signal $u_{n}\left(t\right)$ is then given by
\begin{equation}
U_{n}\left(f\right)=\sum_{i=1}^{M}S_{i}\left(f-f_{i}\right)e^{j2\pi f_{i}\tau_{n}}.\label{eq:Xi Fourier}
\end{equation}

In each sensor, the received signal is first mixed with the periodic
function $p(t)$ prior to filtering and sampling. Since $p(t)$ is
periodic with period $T_{p}=1/f_{p}$, it can be represented by its
Fourier series
\begin{equation}
p(t)=\sum_{l=-\infty}^{\infty}c_{l}e^{j2\pi lf_{p}t},\label{eq:p(t)}
\end{equation}
where
\begin{equation}
c_{l}=\frac{1}{T_{p}}\intop_{0}^{T_{p}}p(t)e^{-j2\pi lf_{p}t}\mathrm{d}t.\label{eq:c_coeff}
\end{equation}
The Fourier transform of the analog multiplication $\tilde{y}_{n}(t)=u_{n}(t)p(t)$
is evaluated as
\begin{eqnarray}
\tilde{Y}_{n}\left(f\right) & = & \intop_{-\infty}^{\infty}u_{n}\left(t\right)p\left(t\right)e^{-j2\pi ft}\mathrm{d}t\nonumber \\
 & = & \intop_{-\infty}^{\infty}u_{n}\left(t\right)\sum_{l=-\infty}^{\infty}c_{l}e^{j2\pi f_{p}lt}e^{-j2\pi ft}\mathrm{d}t\nonumber \\
 & = & \sum_{l=-\infty}^{\infty}c_{l}\intop_{-\infty}^{\infty}u_{n}\left(t\right)e^{-j2\pi t(f-l\cdot f_{p})}\mathrm{d}t\nonumber \\
 & = & \sum_{l=-\infty}^{\infty}c_{l}U_{n}\left(f-lf_{p}\right).\label{eq:mixed signal}
\end{eqnarray}
The mixed signal $\tilde{Y}_{n}\left(f\right)$ is thus a linear combination
of $f_{p}-$shifted and $c_{l}-$scaled copies of $U_{n}\left(f\right)$.
Since $U\left(f\right)=0,\,\forall f\notin\mathcal{F}$, the sum in
(\ref{eq:mixed signal}) contains at most $\left\lceil \frac{f_{\text{Nyq}}}{f_{p}}\right\rceil $
nonzero terms, for each $f$. Fig. \ref{signals figure}(b)-(c) depicts
each transmission and the resulting signal after mixing, respectively.

Substituting (\ref{eq:Xi Fourier}) into (\ref{eq:mixed signal}), we
have
\begin{eqnarray*}
\tilde{Y}_{n}\left(f\right) & = & \sum_{l=-\infty}^{\infty}c_{l}\sum_{i=1}^{M}S_{i}\left(f-f_{i}-lf_{p}\right)e^{j2\pi f_{i}\tau_{n}}.
\end{eqnarray*}
Denote by $h(t)$ and $H(f)$ the impulse and frequency responses of an ideal LPF with cut-off frequency $f_{s}$, respectively. After
filtering $\tilde{y}_{n}(t)$ with $h(t)$, we have
\begin{eqnarray*}
Y_{n}\left(f\right) & = & \tilde{Y}_{n}\left(f\right) H\left(f\right)\\
 & = & \begin{cases}
\sum_{l=-\infty}^{\infty}c_{l}\sum_{i=1}^{M}S_{i}\left(f-f_{i}-lf_{p}\right)e^{j2\pi f_{i}\tau_{n}}, & f\in\mathcal{F}_{s}\\
0, & f\notin\mathcal{F}_{s}.
\end{cases}
\end{eqnarray*}
Note that $Y_{n}\left(f\right)$ only contains frequencies
in the interval $\mathcal{F}_{s}$, due to the lowpass operation. Therefore, it is composed of a finite number of aliases of $U_{n}\left(f\right)$.
Consequently, we can write

\begin{eqnarray*}
Y_{n}\left(f\right) & = & \sum_{l=-L_{0}}^{L_{0}}c_{l} \sum_{i=1}^{M}S_{i}\left(f-f_{i}-lf_{p}\right)e^{j2\pi f_{i}\tau_{n}}\\
 & = & \sum_{i=1}^{M}e^{j2\pi f_{i}\tau_{n}}\sum_{l=-L_{0}}^{L_{0}}c_{l}S_{i}\left(f-f_{i}-lf_{p}\right) \\
 & = & \sum_{i=1}^{M}\tilde{S}_{i}\left(f\right)e^{j2\pi f_{i}\tau_{n}},
\end{eqnarray*}
where $L_{0}$ is the smallest integer such that the sum contains
all nonzero contributions, i.e. $L_{0}=\left\lceil \frac{f_{\text{Nyq}}}{2f_{p}}\right\rceil $,
and
\begin{equation}
\tilde{S}_{i} (f) \triangleq \sum_{l=-L_{0}}^{L_{0}}c_{l}S_{i} (f-f_{i}-lf_{p}).\label{eq:S_tidle}
\end{equation}
The corresponding $Y_n\left(f\right)$ after filtering is depicted in
Fig.~\ref{signals figure}(d). Note that in the interval $\mathcal{F}_{p}$,
$\tilde{S}_{i}\left(f\right)$ is a cyclic shifted and scaled (by
known factors $\left\{ c_{l}\right\} $) version of $S_{i}\left(f\right)$, as shown in Fig.~\ref{s_F and omega f}.

\begin{figure}
\begin{centering}
\fbox{%
\begin{minipage}[t]{1\columnwidth}%
\begin{center}
\includegraphics[width = 0.5\textwidth]{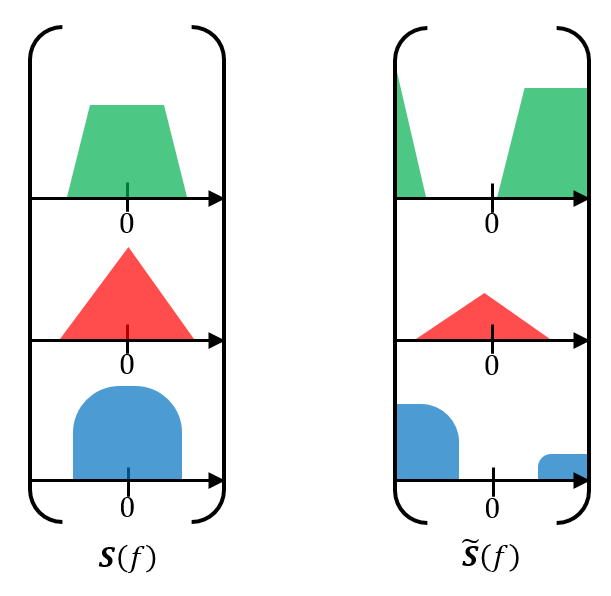}
\par\end{center}%
\end{minipage}}
\par\end{centering}

\protect\caption{The left pane shows the original source signals at baseband (before
modulation). The right pane presents the output signals at baseband
$\tilde{S}\left(f\right)$ after modulation, mixing and filtering.\label{s_F and omega f}}
\end{figure}

After sampling, the discrete-time Fourier transform (DTFT) of the
$n$th sequence $x_{n}\left[k\right]\triangleq y_{n}\left(kT_{s}\right)$
is expressed as
\begin{equation}
X_{n}\left(e^{j2\pi fT_{s}}\right)=\sum_{i=1}^{M}W_{i}\left(e^{j2\pi fT_{s}}\right)e^{j2\pi f_{i}\tau_{n}},\quad f\in\mathcal{F}_{s},\label{eq:dtft_rel}
\end{equation}
where we define $w_{i}\left[k\right]\triangleq\tilde{s}_{i}\left(kT_{s}\right)$
and $W_{i}\left(e^{j2\pi fT_{s}}\right)=\text{DTFT}\left\{ w_{i}\left[k\right]\right\} $.
It is convenient to write (\ref{eq:dtft_rel}) in
matrix form as
\begin{equation}
\label{eq:The Equation}
\mathbf{X}\left(f\right)=\mathbf{A} \mathbf{W}\left(f\right), \quad f \in \mathcal{F}_s.
\end{equation}
Here, $\mathbf{X}\left(f\right)$ is of length $N$ with $n$th
element $X_{n}\left(f\right)=X_{n}\left(e^{j2\pi fT_{s}}\right)$, the unknown vector $\mathbf{W}\left(f\right)$ is of length $M$, with its $i$th entry $W_{i}\left(f\right)=W_{i}\left(e^{j2\pi fT_{s}}\right)$ and the matrix $\mathbf{A}$ depends on the unknown carrier frequencies vector $\boldsymbol{f}$, and is defined by
\begin{equation} \label{eq:A}
\mathbf{A}=
\left(\begin{matrix}e^{j2\pi f_{1}\tau_{1}} & \cdots & e^{j2\pi f_{M}\tau_{1}}\\
\vdots &  & \vdots\\
\\
e^{j2\pi f_{1}\tau_{N}} & \cdots & e^{j2\pi f_{M}\tau_{N}}
\end{matrix}\right).
\end{equation}
In the time domain, we have,
\begin{equation}
\mathbf{x}[k]=\mathbf{A}\mathbf{w}[k],\quad k\in\mathbb{Z},
\end{equation}
where $\mathbf{x}[k]$ has $n$th element $x_{n}[k]$ and $\mathbf{w}[k]$
is a vector of length $M$ with $i$th element $w_{i}[k]$.

In the
next section, we derive sufficient conditions for (\ref{eq:The Equation})
to have a unique solution, namely for perfect recovery of the carrier
frequencies $\boldsymbol{f}$ and the transmissions $\mathbf{s}\left(t\right)$
from the low rate samples $\mathbf{x}\left[k\right]$.

\subsection{Choice of Parameters}

In order to enable perfect blind reconstruction of both the carrier
frequencies $\boldsymbol{f}$ and transmissions $\mathbf{s}\left(t\right)$ in noiseless settings,
we first require (\ref{eq:The Equation}) to have a unique solution.
In addition, we need to ensure that $\mathbf{s}\left(t\right)$ can be uniquely recovered from $\mathbf{w}[k], k \in \mathbb{Z}$.
Theorem \ref{thm: the equation uniquness} presents sufficient conditions
for (\ref{eq:The Equation}) to have a unique solution. Then, Theorem
\ref{prop:signals-recovery:-given} specifies sufficient conditions
for perfect recovery of $\mathbf{s}\left(t\right)$.

\subsubsection{Carrier Frequency Recovery}

We first consider sufficient conditions on the ULA configuration that allow for perfect reconstruction of the carrier frequencies $\boldsymbol{f}$.

\begin{thm}
\label{thm: the equation uniquness} Let $u\left(t\right)$ be an
arbitrary signal in $\mathcal{M}_{1}$ and consider a ULA with
spacing $d<\frac{c}{|\cos\left(\theta\right)| f_{\text{Nyq}}}$ and steering matrix $\mathbf{A}$.
If:
\begin{itemize}
\item (c1) $N>2M-\dim\left(\mbox{span}\left(\mathbf{w}\right)\right)$
\item (c2) $\dim\left(\mbox{span}\left(\mathbf{w}\right)\right)\geq1$,
\end{itemize}
then (\ref{eq:The Equation}) has a unique solution $\left(\boldsymbol{f},\mathbf{w}\right)$. \end{thm}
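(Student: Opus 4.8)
The plan is to read \eqref{eq:The Equation} as a rank-aware multiple measurement vector (MMV) problem and to exploit the Vandermonde structure of the steering matrix $\mathbf{A}$ in \eqref{eq:A}. Writing $\tau_n=n\tau_1$ with $\tau_1=(d/c)\cos\theta$, the $(n,i)$ entry of $\mathbf{A}$ is $z_i^{\,n}$ with $z_i=e^{j2\pi f_i\tau_1}$, so every column is a nonzero scaling of a Vandermonde vector whose node lies on the unit circle. The first step is to show that the spacing hypothesis $d<c/(|\cos\theta|\,f_{\text{Nyq}})$ makes $f\mapsto e^{j2\pi f\tau_1}$ injective on $\mathcal{F}$: since $|f_i-f_j|<f_{\text{Nyq}}$ and $|\tau_1|<1/f_{\text{Nyq}}$ one gets $|(f_i-f_j)\tau_1|<1$, hence $z_i=z_j\iff f_i=f_j$ (here $\theta\neq90^\circ$ guarantees $\tau_1\neq0$). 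Two facts follow that will be used repeatedly: recovering the node set is equivalent to recovering $\boldsymbol{f}$, and any $m\le N$ distinct steering vectors are linearly independent (a nonzero $m\times m$ Vandermonde minor), so the steering dictionary has full spark $N+1$ and every $\mathbf{A}$ built from at most $N$ distinct carriers has full column rank.

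Next I would recast the hypotheses in these terms. For each $f\in\mathcal{F}_s$ the snapshots $\mathbf{X}(f)=\mathbf{A}\mathbf{W}(f)$ share one mixing matrix $\mathbf{A}$ whose active columns are indexed by the (at most $M$) true carriers, and $\mathbf{W}$ is row sparse. Because $\mathbf{A}$ restricted to its active columns has full column rank, the measurement rank equals the source rank, $\dim\operatorname{span}\{\mathbf{X}(f)\}=\dim\operatorname{span}(\mathbf{w})=:r$; condition (c2) guarantees $r\ge1$, so $\mathbf{X}\not\equiv\mathbf{0}$ and the carriers are observable. Now suppose, toward uniqueness, that two pairs $(\boldsymbol{f},\mathbf{w})$ and $(\boldsymbol{f}',\mathbf{w}')$, with active carrier sets $T$ and $T'$ of size at most $M$, both solve \eqref{eq:The Equation}. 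Then $\operatorname{span}\{\mathbf{X}(f)\}$, an $r$-dimensional space, is contained in $\operatorname{Col}(\mathbf{A}_T)$ and in $\operatorname{Col}(\mathbf{A}_{T'})$, hence $\operatorname{Col}(\mathbf{A}_T)\cap\operatorname{Col}(\mathbf{A}_{T'})$ has dimension at least $r$.

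The key step is a dimension count. Since $\dim\operatorname{Col}(\mathbf{A}_T)=|T|$, $\dim\operatorname{Col}(\mathbf{A}_{T'})=|T'|$ and $\operatorname{Col}(\mathbf{A}_T)+\operatorname{Col}(\mathbf{A}_{T'})=\operatorname{Col}(\mathbf{A}_{T\cup T'})$, the subspace dimension identity gives $\dim\bigl(\operatorname{Col}(\mathbf{A}_T)\cap\operatorname{Col}(\mathbf{A}_{T'})\bigr)=|T|+|T'|-\operatorname{rank}(\mathbf{A}_{T\cup T'})$. If $|T\cup T'|>N$ then $\operatorname{rank}(\mathbf{A}_{T\cup T'})=N$ by full spark, and combining with the $\ge r$ bound above yields $N\le|T|+|T'|-r\le2M-r$, contradicting (c1). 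Hence $|T\cup T'|\le N$, the combined steering vectors are independent, and the intersection collapses to $\operatorname{Col}(\mathbf{A}_{T\cap T'})$. Injectivity of $\mathbf{A}_T$ then forces the source rows indexed by $T\setminus T'$ to vanish; as these rows are nonzero by definition of the support $T$, we get $T\subseteq T'$, and symmetrically $T'\subseteq T$, so $T=T'$. With the carrier set pinned down and $\mathbf{A}$ of full column rank on it, $\mathbf{w}=\mathbf{A}^{\dagger}\mathbf{X}$ is unique and $\boldsymbol{f}$ is read off the nodes by the injectivity of the first step.

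I expect the rank-deficient (``coherent source'') regime $r<|T|$ to be the main obstacle. When $r=|T|$ the argument reduces to the familiar subspace/MUSIC picture, where $\operatorname{span}\{\mathbf{X}(f)\}=\operatorname{Col}(\mathbf{A}_T)$ pins the carriers via an annihilating polynomial. For $r<|T|$ the signal subspace is only a proper subspace of $\operatorname{Col}(\mathbf{A}_T)$, so one cannot identify the carriers from the measurement subspace alone; it is precisely the dimension count above---turning the rank $r$ into the slack that forbids $|T\cup T'|>N$---that earns the improvement from the naive $N\ge2M$ bound to (c1)'s $N>2M-r$. Getting that count and the ensuing support argument exactly right, rather than resorting to the crude column-by-column spark bound, is where I would concentrate the effort.
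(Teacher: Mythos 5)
Your argument is correct and takes essentially the same route as the paper: both proofs first use the spacing condition $d<\frac{c}{|\cos\theta| f_{\text{Nyq}}}$ to make $f\mapsto e^{j2\pi f\tau_1}$ injective on $\mathcal{F}$, so that $\mathbf{A}$ is Vandermonde with distinct unit-circle nodes (hence full spark and carriers recoverable from the nodes), and then conclude via the rank-aware uniqueness condition $N>2M-\dim\left(\operatorname{span}\left(\mathbf{w}\right)\right)$. The only difference is that the paper delegates this last step entirely to Proposition 2 of \cite{Kfir2010}, whereas you prove it inline through the dimension count on $\operatorname{Col}(\mathbf{A}_T)\cap\operatorname{Col}(\mathbf{A}_{T'})$ and the support argument --- which is in substance the proof of that cited proposition, correctly handling the rank-deficient case.
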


\begin{IEEEproof}
From the assumption of disjoint transmissions, we have $f_{i}\neq f_{j}$,
for $i\neq j$. Thus, if $d<\frac{c}{|\cos\left(\theta\right)| f_{\text{Nyq}}}$, then it holds that $d\neq\frac{ck}{\left| \cos\left(\theta\right) \right|}\cdot\frac{1}{\left|f_{i}-f_{j}\right|},\,\forall k\in\mathbb{Z},\,\forall i\neq j$, and $e^{j2\pi f_{i} \tau_n} \neq e^{j2\pi f_{j} \tau_n}$ for $1 \leq n \leq N-1$, with $\tau_n$ defined in (\ref{eq:delay}). It follows that $\mathbf{A}$ is a Vandermonde matrix with $M \leq N$, and thus, $\mbox{rank}\left(\mathbf{A}\right)=M$.

Since $d<\frac{c}{|\cos\left(\theta\right)| f_{\text{Nyq}}}$, we have that $2\pi f_{i} \tau_1 \in(-\pi,\pi]$. The proof then follows directly from Proposition 2 in \cite{Kfir2010}.
\begin{prop}
 [Proposition 2, \cite{Kfir2010}] \label{prop:kfir}
If $\left(\boldsymbol{f},\mathbf{w}\right)$ is a
solution to (\ref{eq:The Equation}),
\begin{equation*}
N>2M-\dim\left(\mbox{span}\left(\mathbf{w}\right)\right), \quad 
\dim\left(\mbox{span}\left(\mathbf{w}\right)\right)\geq1
\end{equation*}
then $\left(\boldsymbol{f},\mathbf{w}\right)$ is the unique solution of (\ref{eq:The Equation}).\end{prop}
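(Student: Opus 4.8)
The plan is to recast the claim as a uniqueness statement for a multiple--measurement--vector (MMV) problem with an unknown, parameter--dependent dictionary, and to rule out a second solution by a support--versus--rank counting argument. Suppose $(\boldsymbol{f},\mathbf{w})$ and $(\boldsymbol{f}',\mathbf{w}')$ both solve (\ref{eq:The Equation}), i.e. both produce the same samples $\mathbf{x}[k]=\mathbf{A}(\boldsymbol{f})\mathbf{w}[k]=\mathbf{A}(\boldsymbol{f}')\mathbf{w}'[k]$ for all $k$. First I would collect the samples into a data matrix $\mathbf{X}$ (columns indexed by $k$), whose rank equals $r:=\dim(\mbox{span}(\mathbf{w}))$ because $\mathbf{A}(\boldsymbol{f})$ has full column rank (as shown in Theorem~\ref{thm: the equation uniquness}), and assemble the finite set of \emph{distinct} carriers appearing in either solution into a single $N\times p$ steering matrix $\mathbf{B}$, with $p\le 2M$. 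Writing the two solutions as coefficient matrices $\mathbf{C},\mathbf{C}'$ supported on their respective carriers inside this common dictionary, the hypothesis becomes $\mathbf{B}\mathbf{C}=\mathbf{B}\mathbf{C}'=\mathbf{X}$, so that $\mathbf{E}:=\mathbf{C}-\mathbf{C}'$ satisfies $\mathbf{B}\mathbf{E}=\mathbf{0}$. Because $d<\frac{c}{|\cos(\theta)|f_{\text{Nyq}}}$ forces the generators $e^{j2\pi g\tau_1}$ of the distinct carriers $g$ to be distinct (exactly as in the proof of Theorem~\ref{thm: the equation uniquness}), $\mathbf{B}$ is Vandermonde and every set of at most $N$ of its columns is linearly independent.

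Two structural facts then drive the argument. The first is a support--rank inequality: if $\mathbf{E}\ne\mathbf{0}$ and $\mathbf{B}\mathbf{E}=\mathbf{0}$, then $|\mbox{supp}(\mathbf{E})|\ge N+\mbox{rank}(\mathbf{E})$. Indeed, restricting to the $s:=|\mbox{supp}(\mathbf{E})|$ nonzero rows, the columns of the corresponding submatrix of $\mathbf{E}$ lie in the null space of $\mathbf{B}$ restricted to those $s$ columns, so that restriction has rank at most $s-\mbox{rank}(\mathbf{E})$; since any $\le N$ columns of $\mathbf{B}$ are independent, this is possible only once $s\ge N+\mbox{rank}(\mathbf{E})$. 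The second fact lower--bounds $\mbox{rank}(\mathbf{E})$: since each solution uses at most $M\le N-1$ carriers, the corresponding columns of $\mathbf{B}$ are independent, whence $\mbox{rank}(\mathbf{C})=\mbox{rank}(\mathbf{C}')=\mbox{rank}(\mathbf{X})=r$; on the rows where only the first solution is active (there are $k_1-c$ of them, with $c$ the number of carriers shared by the two solutions), $\mathbf{E}$ coincides with a copy of $\mathbf{C}$ from which at most $c$ of its support rows have been deleted, giving $\mbox{rank}(\mathbf{E})\ge r-c$, and of course $\mbox{rank}(\mathbf{E})\ge1$.

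Finally I would combine these with the elementary count $|\mbox{supp}(\mathbf{E})|\le k_1+k_2-c\le 2M-c$. If $c<r$ the rank bound gives $N+(r-c)\le 2M-c$, i.e. $N\le 2M-r$; if $c\ge r$ the bound $\mbox{rank}(\mathbf{E})\ge1$ gives $N+1\le 2M-c\le 2M-r$. Either way $N\le 2M-r$, contradicting condition (c1), so $\mathbf{E}=\mathbf{0}$: the two solutions share the same active carriers and the same coefficients, and condition (c2) guarantees $\mathbf{X}\ne\mathbf{0}$ so that these carriers are genuinely pinned down, yielding $(\boldsymbol{f}',\mathbf{w}')=(\boldsymbol{f},\mathbf{w})$. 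The step I expect to be delicate is the lower bound $\mbox{rank}(\mathbf{E})\ge r-c$: the crude support bound $|\mbox{supp}(\mathbf{E})|\le 2M$ alone only reproduces the non--blind threshold $N\ge 2M$, and it is precisely the interplay between the data rank $r$ and the overlap $c$ of the two carrier sets that sharpens this to $N>2M-r$. Bookkeeping the carriers common to both solutions, for which $\mathbf{E}$ may vanish on part of the union support, is the subtlety that must be handled carefully.
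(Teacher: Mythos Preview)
The paper does not prove Proposition~\ref{prop:kfir} at all: it is quoted from \cite{Kfir2010} and used as a black box inside the proof of Theorem~\ref{thm: the equation uniquness}. Your proposal therefore goes beyond the paper by supplying a self-contained argument, and the argument is correct. The spark/MMV counting you outline is the standard mechanism behind results of this type: the inequality $|\mbox{supp}(\mathbf{E})|\ge N+\mbox{rank}(\mathbf{E})$ is precisely the statement that a Vandermonde dictionary with distinct generators has Kruskal rank $N$, and your row-deletion bound $\mbox{rank}(\mathbf{E})\ge r-c$ is exactly the step that sharpens the naive $N\ge 2M$ threshold to $N>2M-r$. Both branches of the case split $c<r$ versus $c\ge r$ close correctly. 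The only caveat, which you already flag, is that $\mathbf{E}=\mathbf{0}$ pins down the \emph{active} pairs $\{(f_i,w_i):w_i\not\equiv 0\}$; condition (c2) guarantees the data are nonzero but does not by itself exclude inactive carriers, so uniqueness should be read modulo such degenerate components --- this is implicit in the model $\mathcal{M}_1$ and in \cite{Kfir2010}.
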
 \vspace{-1.75em}
\end{IEEEproof}
Note that $\dim\left(\mbox{span}\left(\mathbf{w}\right)\right) < 1$ iff $u(t) \equiv 0$, that is the received signal does not contain any transmission.


\subsubsection{Signal Recovery}

While Theorem \ref{thm: the equation uniquness} guarantees the uniqueness
of $\left(\boldsymbol{f},\mathbf{w}\right)$, some
additional conditions need to be imposed in order to uniquely derive $\mathbf{s}\left(t\right)$ from $\mathbf{w}$, as $\mathbf{w}$
is a sampled permutation of $\mathbf{s}\left(t\right)$.
Obviously, in order to be able to achieve perfect reconstruction of
$\mathbf{s}\left(t\right)$, the preprocessing of the signal
(i.e mixing with $p(t)$ and filtering with $h\left(t\right)$) should
not cause any loss of information. The following lemma presents conditions
on $p(t)$ and $H(f)$ so that each entry of the processed
signal vector $\tilde{\mathbf{S}}(f)$ is a cyclic shift (up to
scaling by known factors $\{c_{l}\}$) of the matching entry of the
original source signal vector $\mathbf{S}(f)$, as shown in Fig.
\ref{s_F and omega f}. In particular, the transformation between $\mathbf{S}(f)$ and $\tilde{\mathbf{S}}(f)$ should be invertible so that the former can be recovered from the latter. 

\begin{lem}
\label{lem:singal equality} If $f_s \geq f_{p}\geq B$ and $c_{l}\neq0$ for all $l\in\left\{ -L_{0},...,L_{0}\right\} $,
where $c_{l}$ is defined in (\ref{eq:c_coeff}), then
\begin{equation}
\forall f'\in\mathcal{F}_{p},\exists k:\,\tilde{S}_{i}\left(f'\right)=c_{k}S_{i}\left(f'-f_{i}-kf_{p}\right).\label{eq:uni1}
\end{equation}
\end{lem}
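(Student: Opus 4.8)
The plan is to show that, although $\tilde{S}_i$ is defined in (\ref{eq:S_tidle}) as a sum of $2L_0+1$ shifted and scaled copies of $S_i$, restricting the frequency variable to $\mathcal{F}_p$ collapses this sum to a single surviving term. The geometric fact driving this is that the summand indexed by $l$ is a copy of $S_i$ supported on the interval $\mathcal{B}+f_i+lf_p$, an interval of width $B$ centered at $f_i+lf_p$; since consecutive centers are spaced by $f_p\geq B$, these intervals have pairwise disjoint interiors. Hence for each fixed $f'$ at most one index can contribute, and that index is exactly the $k$ appearing in (\ref{eq:uni1}).

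To carry this out, I would first fix $f'\in\mathcal{F}_p$ and observe that the summand $c_lS_i(f'-f_i-lf_p)$ is nonzero only when $f'-f_i-lf_p\in\mathcal{B}=[-B/2,B/2]$, i.e. when $lf_p\in(f'-f_i)+\mathcal{B}$. The right-hand side is an interval of length $B\leq f_p$, and two distinct integer multiples of $f_p$ cannot simultaneously lie in an interval of length at most $f_p$ (save for a shared endpoint when $f_p=B$). Therefore there is at most one integer $k$ with $f'-f_i-kf_p\in\mathcal{B}$; all other summands vanish, giving $\tilde{S}_i(f')=c_kS_i(f'-f_i-kf_p)$. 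If no such index exists, then $\tilde{S}_i(f')=0$ and every summand on the right of (\ref{eq:S_tidle}) is zero, so any choice of $k$ in range satisfies (\ref{eq:uni1}) with both sides equal to zero.

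The remaining hypotheses enter as bookkeeping and invertibility conditions. The assumption $f_s\geq f_p$ ensures $\mathcal{F}_p\subseteq\mathcal{F}_s$, so the low-pass stage (cutoff $f_s/2$) that produced $\tilde{S}_i$ preserves the entire band $\mathcal{F}_p$ under consideration; together with $U(f)=0$ outside $\mathcal{F}$ and the choice $L_0=\lceil f_{\text{Nyq}}/(2f_p)\rceil$, this guarantees that the truncated range $l\in\{-L_0,\dots,L_0\}$ already contains every copy that can intersect $\mathcal{F}_p$, so no term has been dropped. The condition $c_l\neq0$ for all $l\in\{-L_0,\dots,L_0\}$, with $c_l$ as in (\ref{eq:c_coeff}), is what upgrades (\ref{eq:uni1}) from a mere identity into an invertible relation: since the surviving coefficient $c_k$ is nonzero, one can divide by it to recover $S_i(f'-f_i-kf_p)$ from $\tilde{S}_i(f')$, which is precisely the invertibility of the map $\mathbf{S}\mapsto\tilde{\mathbf{S}}$ that the text requires for signal recovery.

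I expect the main subtlety to be the boundary case $f_p=B$, where adjacent support intervals meet at a single frequency and two summands could in principle both be active at that isolated point. I would handle this by noting that the overlap occurs only on a measure-zero set (the shared band edges), which does not affect recovery of $S_i$ as an $L^2$ function; alternatively one argues directly that the band-edge values coincide with the claimed single-term expression. The disjoint-support argument itself is routine; the care lies in the indexing bookkeeping above and in confirming that the unique surviving $k$ indeed falls within $\{-L_0,\dots,L_0\}$.
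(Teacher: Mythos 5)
Your proposal is correct and follows essentially the same route as the paper's proof: since $f_p \geq B$, the shifted copies $S_i(f'-f_i-lf_p)$ have disjoint supports, so at most one term of the sum defining $\tilde{S}_i(f')$ survives for each $f'\in\mathcal{F}_p$. Your extra care with the vacuous case (no surviving index) and the $f_p=B$ band-edge is a minor refinement of the same disjoint-bands argument the paper gives.
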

\begin{IEEEproof}
Consider the $i$th transmission. The output of the LPF $H(f)$, namely
$\tilde{S}_{i}\left(f\right)$, is given by
\begin{equation}
\tilde{S}_{i}\left(f\right)=\begin{cases}
\sum_{l=-L_{0}}^{L_{0}}c_{l}S_{i}\left(f-f_{i}-lf_{p}\right), & f\in\mathcal{F}_{p}\\
0, & f\notin\mathcal{F}_{p}.
\end{cases}\label{eq:sum_lpf}
\end{equation}
Since $f_{p}\geq B$, the sum in (\ref{eq:sum_lpf}) is over disjoint
bands and only one of its elements is nonzero for each $f$. Equation (\ref{eq:uni1}) is true for $k$ that satisfies $f'-f_{i}-kf_{p}\in\mathcal{F}_{p}$, since for any other $k'\neq k$, $f'-f_{i}-k'f_{p}\notin\mathcal{F}_{p}$
and $\tilde{S}_{i}\left(f'-f_{i}-k'f_{p}\right)=0$.
\end{IEEEproof}


Moreover, if $f_s \geq f_p \geq B$, then the system sampling rate obeys the Nyquist
rate of $\tilde{S}_{i}\left(f\right)$, which means that $\mathbf{S}\left(t\right)$
can be perfectly recovered from $\mathbf{w}\left[k\right]$ and
it holds that
\begin{equation}
W_{i}\left(e^{j2\pi fT_{s}}\right)=\tilde{S}_{i}\left(f\right)\qquad f\in\mathcal{F}_{s}.%
\end{equation}

Theorem \ref{prop:signals-recovery:-given} summarizes sufficient conditions for perfect blind reconstruction
of $\mathbf{s}(t)$ from the low rate samples $\mathbf{x}[k]$.
\begin{thm}
\label{prop:signals-recovery:-given} Let $u(t)$ and the ULA be as
in Theorem \ref{thm: the equation uniquness} and let $\left(\boldsymbol{f},\mathbf{w}\right)$
be the unique solution of (\ref{eq:The Equation}). If:

\label{enu: signals recovery cond1}
\begin{itemize}
\item {(c1)} $c_{l}\neq0$ for all $l\in\left\{ -L_{0},...,L_{0}\right\} $,
where $c_{l}$ is defined in (\ref{eq:c_coeff})
\item {(c2)} $f_{s} \geq f_{p}\geq B$, \label{enu:signals recovery cond2}
\end{itemize}
then $\left\{ \hat{s}_{i}\left(t\right)\right\} _{i=1}^{M}$ can
be uniquely recovered from $\mathbf{x}\left[k\right]$. \end{thm}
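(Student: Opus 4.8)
The plan is to recover $\mathbf{s}(t)$ by composing three invertible steps, each supplied by a result already established: the uniqueness of $(\boldsymbol{f},\mathbf{w})$ from Theorem \ref{thm: the equation uniquness}, the recovery of the processed transmissions $\tilde{s}_i(t)$ from their samples $w_i[k]$ by Nyquist interpolation, and the inversion of the mixing-and-filtering map from $\tilde{S}_i(f)$ back to $S_i(f)$ via Lemma \ref{lem:singal equality}. Since the theorem hypotheses already posit that $(\boldsymbol{f},\mathbf{w})$ is the unique solution of (\ref{eq:The Equation}), the first step is granted; it provides both the carrier estimates $\hat{f}_i=f_i$ and the sequences $w_i[k]$, which is all the subsequent steps consume.

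First I would reconstruct each $\tilde{s}_i(t)$ from its samples $w_i[k]=\tilde{s}_i(kT_s)$. By (\ref{eq:sum_lpf}), $\tilde{S}_i(f)$ is supported in $\mathcal{F}_p=[-f_p/2,f_p/2]$, and condition (c2) gives $f_s\ge f_p$, so the sampling rate meets the Nyquist rate of $\tilde{s}_i(t)$ and no aliasing occurs. Consequently $W_i(e^{j2\pi fT_s})=\tilde{S}_i(f)$ for $f\in\mathcal{F}_s$, and $\tilde{s}_i(t)$ is obtained by the standard sinc interpolation of $w_i[k]$. This step is routine once the no-aliasing inequality in (c2) is in place.

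The substantive step is inverting the map $S_i\mapsto\tilde{S}_i$ to recover the original source spectrum, and this is where Lemma \ref{lem:singal equality} does the work. Under (c1) and (c2), the lemma shows that for every $f'\in\mathcal{F}_p$ there is a unique index $k=k(f')$ with $\tilde{S}_i(f')=c_k\,S_i(f'-f_i-kf_p)$, because $f_p\ge B$ forces the shifted copies of $S_i$ in (\ref{eq:S_tidle}) to occupy disjoint bands so that exactly one term survives. I would therefore partition $\mathcal{F}_p$ into the sub-bands on which a fixed $k$ is active, undo the scaling by dividing by $c_k$ (legitimate precisely because (c1) guarantees $c_k\neq0$), and restore each piece to its original location by the frequency shift $f'\mapsto f'-f_i-kf_p$. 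Using the recovered $f_i$ from the first step to identify these locations, this reassembles $S_i(f)$ on all of $\mathcal{F}$, and an inverse Fourier transform yields $\hat{s}_i(t)=s_i(t)$.

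The main obstacle is making the reassembly in the last paragraph unambiguous: one must argue that the collection of active indices $k$ together with the shifts $f_i+kf_p$ tiles the true support of $S_i$ without overlap or omission, so that dividing by $c_k$ and shifting reconstructs $S_i$ exactly rather than a folded or truncated version. This is exactly what the disjointness of bands furnished by $f_p\ge B$ and the nonvanishing of all $c_l$ for $|l|\le L_0$ in (c1) are designed to guarantee; once the tiling is verified, invertibility of the preprocessing, and hence uniqueness of $\{\hat{s}_i(t)\}$, follows immediately.
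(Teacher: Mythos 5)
Your proposal is correct and follows essentially the same route as the paper: the paper's proof likewise takes $f'$ in the baseband support, applies Lemma~\ref{lem:singal equality} to isolate the single surviving term $c_{l_a}S_i(f'-f_i-l_a f_p)$, divides by the nonzero $c_{l_a}$, and shifts back using the recovered $f_i$, with the aliasing index given explicitly by $l_{a}=\left\lfloor \frac{f_{i}+f'+f_p/2}{f_{p}}\right\rfloor$. The tiling concern you raise at the end is handled in the paper by this closed-form expression for $l_a$ (and the remark that at most two $f_p$-bins can overlap a given transmission's support), so no additional argument is needed.
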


\begin{IEEEproof}
Consider the $i$th transmission and let $f'\in\mathcal{B}\subseteq\mathcal{F}_{p}$.
Since $s_{i}(t)$ is bandlimited to $\mathcal{B}$, it holds that
\begin{equation}
W_{i}\left(e^{j2\pi f'T_{s}}\right)=\tilde{S}_{i}\left(f'\right)=c_{l_{a}} S_{i}\left(f'-f_{i}-l_{a}\cdot f_{p}\right),
\end{equation}
where the last equality follows from Lemma \ref{lem:singal equality}.
Since $c_{l_{a}}\neq0$, we have
\begin{equation}
S_{i}\left(f'-f_{i}-l_{a}\cdot f_{p}\right)=\frac{1}{c_{l_{a}}}W_{i}\left(e^{j2\pi f'T_{s}}\right),
\end{equation}
or, after a change of variables,
\begin{equation}
S_{i}\left(f'\right)=\frac{1}{c_{l_{a}}}W_{i}\left(e^{j2\pi\left(f'+f_{i}+l_{a}\cdot f_{p}\right)T_{s}}\right),\label{eq:sVsw}
\end{equation}
where $l_{a}$ is given by
\begin{equation} \label{eq:la}
l_{a}=\left\lfloor \frac{f_{i}+f'+f_p/2}{f_{p}}\right\rfloor,
\end{equation}
completing the proof.
\end{IEEEproof}
Note that $l_a$, defined in (\ref{eq:la}), can only be the index of one of the two $f_p$-bins that may overlap with the $i$th transmission's support. 

\subsubsection{Minimal Sampling Rate}

It was previously proved in \cite{Mishali09} that the minimal sampling
rate for perfect blind reconstruction of a signal of the model {\emph{$\mathcal{M}_{1}$}
is $2MB$. The sampling rate in our ULA based scheme is governed by $B$ and $\dim\left(\mbox{span}\left(\mathbf{w}\right)\right)$, where $1 \leq \dim\left(\mbox{span}\left(\mathbf{w}\right)\right) \leq M$. Therefore, in the worst case, the minimal sampling rate that can be achieved is $2MB$, in accordance with \cite{Mishali09}. With high probability, $\dim\left(\mbox{span}\left(\mathbf{w}\right)\right) = M$ and the minimal rate becomes as low as $\left(M+1\right)B$.

If our sole objective is carrier frequency recovery, then we can further reduce the sampling rate of each channel $f_s$ below $B$. However, in this case, the signal $W_{i}\left(e^{j2\pi fT_{s}}\right)$
is an aliased version of $\tilde{S}_{i}\left(f\right)$. A possible,
though unlikely, consequence of the aliasing is that for some transmission,
the folded versions of $\tilde{S}_{i}\left(f\right)$ cancel each
other and result in $W_{i}\left(e^{j2\pi fT_{s}}\right)\equiv0$.
In such a case, $W_{i}\left(e^{j2\pi fT_{s}}\right)$ and the corresponding
$i$th column of the steering matrix will not appear in (\ref{eq:The Equation}).
Nevertheless, this unlikely scenario will not affect the recovery
of the other signals carrier frequencies. The carrier frequency recovery
is possible for each $s_{i}(t)$ such that $W_{i}\left(e^{j2\pi fT_{s}}\right)\neq0$,
even if $W_{i}\left(e^{j2\pi fT_{s}}\right)$ has suffered from loss
of information due to folding.

\subsection{Reconstruction Methods}
\label{sec:rec1}

In this section, we propose two carrier frequency reconstruction methods that solve (\ref{eq:The Equation}). The first follows from the ESPRIT algorithm \cite{PAUL1986} while the second is based on CS \cite{CSBook}. Once the carriers are estimated, one can recover the transmissions $\mathbf{s}(t)$ by inverting (\ref{eq:The Equation}) and substitute the recovered $W_i(e^{j2 \pi fT_s})$ into (\ref{eq:sVsw}).

\subsubsection{ESPRIT Approach}

\label{sec:prob1_rec}

One practical method
to obtain a solution $\left(\hat{\boldsymbol{f}},\hat{\mathbf{w}}\right)$
is by using the ESPRIT algorithm \cite{PAUL1986} on the measurement
set $\mathbf{x}[k]$, as in \cite{Kfir2010} (Section C.). We can
either assume that the number of source signals $M$ is known or first
estimate it using the minimum description length (MDL)
algorithm \cite{PAUL1986}, for example.

One of the conditions needed to use ESPRIT is that the correlation
matrix $\mathbf{R}_{w} = \sum_{k \in \mathbb{Z}} \mathbf{w}[k]\mathbf{w}^{H}[k] $
is positive definite. From \cite{Kfir2010} (Proposition 3), if $\dim\left(\mbox{span}\left(\mathbf{w}\right)\right)=M$,
then $\mathbf{R}_{w}\succ0$. Therefore, the authors in \cite{Kfir2010}
distinguish between two cases. The first, where $\mathbf{R}_{w}\succ0$,
is referred to as the uncorrelated case. Here, ESPRIT can be directly
applied on $\mathbf{R} = \sum_{k \in \mathbb{Z}} \mathbf{x}[k]\mathbf{x}^{H}[k]$.
The main steps of ESPRIT are summarized in Algorithm \ref{algo:esprit}. In the algorithm description, $\mbox{eig}\left(\mathbf{\Psi}\right)$ is a vector of the eigenvalues
of $\mathbf{\Psi}$ and the correlation matrix $\mathbf{R}$ is estimated as
\begin{equation} \label{eq:Rest}
\mathbf{R}=\sum_{k=1}^{Q}\mathbf{x}[k] \mathbf{x}^H[k],
\end{equation}
where $Q$ is the number of snapshots for the averaging and $\mathbf{x}[k]$
is the vector of samples from the $k$th snapshot.

\begin{algorithm}[H]
\textbf{\uline{Input:}}\textbf{ }
\begin{itemize}
\item $Q$ snapshots of the sensors measurements $\mathbf{x}[k]$ 
\end{itemize}
\textbf{\uline{Output:}}
\begin{itemize}
\item $\hat{\boldsymbol{f}}$ - estimated carriers frequencies
\end{itemize}
\textbf{\uline{Algorithm:}}
\begin{enumerate}
\item Estimate the sample covariance $\mathbf{R}$ from (\ref{eq:Rest})
\item Decompose $\mathbf{R}$ using the singular value
decomposition: $\mathbf{U},\mathbf{S},\mathbf{V}=\mbox{svd}\left(\mathbf{R}\right)$
\item Extract signal subspace: $\mathbf{U}_{s}=\left[\mathbf{U}^{1},...,\mathbf{U}^{M}\right]$
\item Define: $\mathbf{U}_{1}=\left[\mathbf{U}^{1},...,\mathbf{U}^{M-1}\right]$,
$\mathbf{U}_{2}=\left[\mathbf{U}^{2},...,\mathbf{U}^{M}\right]$
\item Least squares recovery:

\begin{enumerate}
\item $\mathbf{\Psi}=\mathbf{U}_{2}\mathbf{U}_{1}^{\dagger}$
\item $\boldsymbol{f}=\arccos\left[\angle\left(\mbox{eig}\left(\mathbf{\Psi}\right)\right)\right]\cdot\frac{c}{2\pi d}$
\end{enumerate}
\end{enumerate}
\protect\caption{ESPRIT}

\label{algo:esprit}
\end{algorithm}

If $\dim\left(\mbox{span}\left(\mathbf{w}\right)\right)<M$, then
the rank of the correlation matrix $\mathbf{R}$ is less than $M$.
Here, an additional step is implemented to construct a smoothed correlation matrix of rank $M$, before applying ESPRIT.
This case is referred to as the correlated case \cite{Kfir2010}.
The smoothed correlation matrix is given by
\begin{equation}
\bar{\mathbf{R}}=\frac{1}{V}\sum_{l=1}^{V} \sum_{k \in \mathbb{Z}} \mathbf{x}_{l}\left[k\right] \mathbf{x}_{l}^{H}\left[k\right],\label{eq:smoothedR}
\end{equation}
where $V\triangleq N-M$ and
\begin{equation}
\mathbf{x}_{l}\left[k\right]\triangleq\left[\begin{matrix}x_{l}\left[k\right] & x_{l+1}\left[k\right] & \cdots & x_{l+M}\left[k\right]\end{matrix}\right]^{T},\quad1\leq l\leq V.
\end{equation}

Note that in order to be able to construct the smoothed correlation
matrix, one should require $N>2M-\dim\left(\mbox{span}\left(\mathbf{w}\right)\right)$,
which is exactly condition ({c2}) in Theorem \ref{thm: the equation uniquness}.

Once the carrier frequencies $f_{i}$ are recovered, the steering
matrix $\mathbf{A}$, defined in (\ref{eq:A})
can be constructed. The vector $\mathbf{W}(f)$ is then obtained
by inverting the steering matrix,
\begin{equation}
\mathbf{W}(f)=\mathbf{A}^{\dagger}\mathbf{X}(f),\label{eq:rec_sig}
\end{equation}
and the source signal vector is computed using (\ref{eq:sVsw}).

\subsubsection{CS Approach}

Suppose that the carrier frequencies $f_{i}$ lie on a grid $\{\delta l\}_{l=-L}^{L}$,
with $L=\frac{f_{\text{Nyq}}}{2\delta}$. Here, $\delta$ is a parameter
of the recovery algorithm that defines the grid resolution. Equation
(\ref{eq:The Equation}) then becomes
\begin{equation}
\mathbf{x}[k]=\mathbf{G}\mathbf{w}[k],\quad k\in\mathbb{Z},\label{eq:ongrid}
\end{equation}
where ${\bf G}$ is a $N\times(2L+1)$ matrix with $(n,l)$ element
$G_{nl}=e^{j2\pi\tau_{n}l\delta}$. The nonzero elements of the sparse
$(2L+1)\times1$ vector $\mathbf{w}[k]$ have unknown indices $l_{i}=\frac{f_{i}}{\delta}$
for $1\leq i\leq M$.

The set of equations (\ref{eq:ongrid}) represents an infinite number
of linear systems with joint sparsity. Such systems are known as infinite measurement
vectors (IMV) in the CS literature \cite{rembo}. We use the support
recovery paradigm from \cite{Mishali09} that produces a
finite system of equations, called multiple measurement vectors (MMV)
from an infinite number of linear systems. This reduction is performed
by what is referred to as the continuous to finite (CTF) block \cite{rembo,CSBook}.

From (\ref{eq:ongrid}), we have
\begin{equation}
\mathbf{R=GR}_w^g \mathbf{G}^{H}
\end{equation}
where $\mathbf{R}=\sum_{k \in \mathbb{Z}}\mathbf{x}[k]\mathbf{x}^{H}[k]=\int_{f\in\mathcal{F}_{s}}\mathbf{X}(f)\mathbf{X}^{H}(f)\mathrm{d}f$
is a $N\times N$ matrix and $\mathbf{R}_w^g=\sum_{k \in \mathbb{Z}}\mathbf{w}[k]\mathbf{w}^{H}[k]=\int_{f\in\mathcal{F}_{s}}\mathbf{W}(f)\mathbf{W}^{H}(f)\mathrm{d}f$
is a $M\times M$ matrix. We then construct a frame ${\bf V}$ such
that $\mathbf{R=VV}^{H}$. Clearly, there are many possible ways to
select ${\bf V}$. We construct it by performing an eigendecomposition
of ${\bf R}$ and choosing ${\bf V}$ as the matrix of eigenvectors
corresponding to the nonzero eigenvalues. We can then define the following
linear system
\begin{equation}
{\bf V=GU.}\label{eq:CTF}
\end{equation}
From \cite{Mishali09} (Propositions 2-3), the support of
the unique sparsest solution of (\ref{eq:CTF}) is the same as the
support of the original set of equations (\ref{eq:ongrid}). Equation
(\ref{eq:CTF}) can be solved using any MMV CS algorithm, such as
simultaneous orthogonal matching pursuit (SOMP) \cite{CSBook}.

Once the support $S$ of ${\bf U}$, namely the support of $\mathbf{W}(f)$,
is recovered, the carrier frequencies $f_{i}$ are computed using
$f_{i}=l_{i}\delta$, with $l_{i}\in S$, and the steering matrix
$\mathbf{A}$, defined in (\ref{eq:A}) is constructed. The vectors ${\mathbf{w}[k]}$ and $\mathbf{s}(t)$ are then obtained using (\ref{eq:rec_sig}) and (\ref{eq:sVsw}), respectively.

Theorem \ref{thm:cs} shows that the conditions for perfect recovery of $\mathbf{w}[k]$ from (\ref{eq:ongrid}) are identical to those derived in the previous section.

\begin{thm}
\label{thm:cs} Let $u(t)$ be an arbitrary signal within $\mathcal{M}_{1}$
and consider a ULA with spacing $d<\frac{c}{|\cos(\theta) |f_{\text{Nyq}}}$.
The minimal number of sensors required for perfect recovery of $\mathbf{w}[k]$
in (\ref{eq:ongrid}) in a noiseless environment is $N> 2M-\text{dim}(\text{span}(\mathbf{w}))$.
\end{thm}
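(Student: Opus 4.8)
The plan is to recast the on-grid system (\ref{eq:ongrid}) as a sparse-recovery problem and then reduce the whole claim to a single quantity, the spark of the dictionary $\mathbf{G}$. Since $\mathbf{w}[k]$ is jointly $M$-sparse (its nonzero rows sit at the indices $l_i = f_i/\delta$), recovering $\mathbf{w}[k]$ is equivalent to recovering this common support and then inverting $\mathbf{G}$ on it. I therefore expect the minimal sensor count to be dictated by the classical multiple-measurement-vector (MMV) uniqueness bound, whose governing parameter is $\mbox{spark}(\mathbf{G})$, and the final step will be to check that the resulting inequality coincides with condition (c1) of Theorem \ref{thm: the equation uniquness}.

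First I would perform the continuous-to-finite reduction already set up in the excerpt. Forming $\mathbf{R} = \sum_{k}\mathbf{x}[k]\mathbf{x}^H[k] = \mathbf{V}\mathbf{V}^H$ and solving the finite MMV system $\mathbf{V} = \mathbf{G}\mathbf{U}$ recovers, by Propositions 2--3 of \cite{Mishali09}, exactly the joint support of the infinite family (\ref{eq:ongrid}); once that support (hence $\boldsymbol{f}$) is known, each $\mathbf{w}[k]$ is obtained by inverting the corresponding columns of $\mathbf{G}$. The relevant rank here is $\mbox{rank}(\mathbf{V}) = \mbox{rank}(\mathbf{R}) = \dim(\mbox{span}(\mathbf{w}))$, so the question collapses to: when is the $M$-row-sparse solution $\mathbf{U}$ of $\mathbf{V} = \mathbf{G}\mathbf{U}$ unique?

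The crux, and the step I expect to be the main obstacle, is computing $\mbox{spark}(\mathbf{G})$ and showing it is maximal, $\mbox{spark}(\mathbf{G}) = N+1$. Writing $z_l = e^{j2\pi \frac{d}{c}\cos(\theta)\delta l}$ and recalling $\tau_n = \frac{dn}{c}\cos(\theta)$, the $l$-th column of $\mathbf{G}$ is $(z_l, z_l^2, \ldots, z_l^N)^T$, a Vandermonde column up to the nonzero scalar $z_l$. The spacing hypothesis $d < \frac{c}{|\cos(\theta)| f_{\text{Nyq}}}$ together with $L\delta = f_{\text{Nyq}}/2$ forces $\frac{d}{c}|\cos(\theta)|\delta\,|l-l'| \leq \frac{d}{c}|\cos(\theta)| f_{\text{Nyq}} < 1$ for all distinct grid indices $l,l' \in \{-L,\ldots,L\}$, so the winding stays below a full turn and the nodes $z_l$ are pairwise distinct (using $\cos\theta \neq 0$ since $\theta \neq 90^\circ$). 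Converting the spacing inequality into this distinctness statement over the entire grid is the delicate point; once it is in hand, every $N\times N$ submatrix of $\mathbf{G}$ is an invertible (column-scaled) Vandermonde matrix, any $N$ columns are linearly independent, and hence $\mbox{spark}(\mathbf{G}) = N+1$.

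Finally I would invoke the MMV uniqueness characterization: the $M$-sparse $\mathbf{U}$ is the unique solution of $\mathbf{V} = \mathbf{G}\mathbf{U}$ if and only if $M < \tfrac{1}{2}\big(\mbox{spark}(\mathbf{G}) - 1 + \mbox{rank}(\mathbf{V})\big)$. Substituting $\mbox{spark}(\mathbf{G}) = N+1$ and $\mbox{rank}(\mathbf{V}) = \dim(\mbox{span}(\mathbf{w}))$ yields $M < \tfrac{1}{2}\big(N + \dim(\mbox{span}(\mathbf{w}))\big)$, i.e. $N > 2M - \dim(\mbox{span}(\mathbf{w}))$, which is precisely (c1). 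Because the spark is maximal there is no slack in the bound, so the condition is simultaneously sufficient and necessary, establishing that this is the \emph{minimal} number of sensors and that the CS approach requires exactly the same array size as the ESPRIT-based guarantee of Theorem \ref{thm: the equation uniquness}.
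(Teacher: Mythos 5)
Your proof follows essentially the same route as the paper's: the Vandermonde structure of $\mathbf{G}$ (guaranteed by the spacing condition) gives full spark, and the MMV uniqueness bound $M < \tfrac{1}{2}\left(\text{spark}(\mathbf{G}) - 1 + \text{rank}(\mathbf{V})\right)$ from \cite{rankCS}, combined with $\text{rank}(\mathbf{V}) = \text{dim}(\text{span}(\mathbf{w}))$, yields the stated sensor count. Your write-up is in fact more careful than the paper's sketch: you correctly compute $\text{spark}(\mathbf{G}) = N+1$ (the paper misprints this as $M$) and you supply the explicit node-distinctness argument showing that $d<\frac{c}{|\cos(\theta)|f_{\text{Nyq}}}$ keeps the phases within one winding over the whole grid, a step the paper omits.
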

Theorem \ref{thm:cs} follows directly from the fact that if $d<\frac{c}{|\cos(\theta)| f_{\text{Nyq}}}$,
then ${\bf G}$ is a Vandermonde matrix, and therefore has full spark, namely $\text{spark}(\mathbf{G})=M$. Then, we use the MMV recovery condition from \cite{rankCS}, given by
\begin{equation}
M < \frac{\text{spark}(\mathbf{G}) -1 + \text{rank}(\mathbf{V})}{2},
\end{equation}
where $1 \leq \text{rank}(\mathbf{V}) \leq M$. Finally, it holds that
\begin{equation}
\text{rank}(\mathbf{V}) = \text{dim}(\text{span}(\mathbf{x})) = \text{dim}(\text{span}(\mathbf{w})),
\end{equation}
where the last equality follows from the fact that $\bf G$ is full spark.

In the worst case, it holds that $\text{rank}(\mathbf{V}) = \text{dim}(\text{span}(\mathbf{w})) = 1$ and the MMV processing does not improve the recovery ability over the single measurement vector (SMV) case. The required number of sensors is then $2M$, leading to a minimal sampling rate of $2MB$. With high probability, $\text{rank}(\mathbf{V})=\text{dim}(\text{span}(\mathbf{w}))=M$ and the number of sensors required is thus reduced to $M+1$.

\subsection{Comparison with the MWC}

Both our ULA based system and the MWC \cite{Mishali10} allow for reconstruction of multiband signals from samples obtained below the Nyquist rate. The ULA approach adopts the same sampling principle as the MWC but differs in some essential ways. First, the MWC uses one sensor composed of $N$ analog processing channels, whereas the
ULA scheme uses $N$ sensors, each composed of one channel.
While both systems use the same amount of $N$ mixers, LPFs and samplers,
this difference of configuration leads to essential distinctions between
the systems. Since all the MWC channels belong to the same sensor,
they are all affected by the same additive sensor noise, i.e. $\tilde{u}(t)=u(t)+\eta(t)$
in all channels. In the ULA, each channel belongs to a different sensor and as a consequence, is corrupted by a different additive sensor
noise, namely $\tilde{u}_{n}(t)=u_{n}(t)+\eta_{n}(t)$, where $\eta_{n}(t)$
can be assumed to be uncorrelated between channels. This is an advantage
of the ULA based method since the noise is averaged. Moreover, a known
difficulty of the MWC is choosing appropriate mixing functions $\left\{ p_{n}(t)\right\} $
so that the original signal can be reconstructed. The ULA scheme allows for all sensors to use the same function $p\left(t\right)$,
and this function does not have any limitation other than $f_{p}>B$
and $c_{l}\neq0$ for all $l\in\left\{ -L_{0},...,L_{0}\right\} $. Finally, the ULA configuration can be extended to allow for joint carrier and DOA recovery, as shown in Section \ref{sec:joint}.
Table \ref{table:comp} summarizes the main properties of each system.

\begin{table}
{\tiny{}{}}{\tiny \par}
\centering

\begin{tabular}{|c|c|c|}
\hline
 & {\tiny{}{}ULA based system}  & {\tiny{}{}MWC}\tabularnewline
\hline
\hline
{\tiny{}{}Periodic functions}  & {\tiny{}{}one function for all sensors}  & {\tiny{}{}one function per channel}\tabularnewline
\hline
{\tiny{}{}Number of samplers}  & {\tiny{}{}$N$ - number of sensors}  & {\tiny{}{}$N$ - number of channels}\tabularnewline
\hline
{\tiny{}{}Minimal sampling rate (average)}  & {\tiny{}{}$\left(M+1\right)B$}  & {\tiny{}{}$(M+1)B$}\tabularnewline
\hline
{\tiny{}{}Minimal sampling rate (worst)}  & {\tiny{}{}$2MB$}  & {\tiny{}{}$2MB$}\tabularnewline
\hline
{\tiny{}{}Practical sampling rate}  & {\tiny{}{}$Nf_{s}$}  & {\tiny{}{}$Nf_{s}$}\tabularnewline
\hline
\end{tabular}

\protect

\

\caption{Main properties of the ULA based and MWC system.}

\label{table:comp}
\end{table}

\section{Numerical Experiments}
\label{sec:exp}

\label{sec:sim}

We now numerically investigate different aspects of our
system shown in Fig.~{\ref{ULA fig}} and show that it ourperforms the MWC system \cite{Mishali10} of Fig.~\ref{fig:mwc} at low SNRs in terms of recovery error. We first explore the impact of SNR, sampling rate, sensors distance $d$, number of sensors/channels $N$ and number of snapshots $Q$ on the signal reconstruction performance. We then consider carrier frequency recovery only and demonstrate that the sampling rate can be made lower than the Landau rate \cite{Landau67} in this case. For the
ULA based system, we show both the MMV CS and ESPRIT approaches described in Section \ref{sec:rec1}. For the MMV method, we use SOMP \cite{CSBook} for recovering the support $S$.

\subsection{Simulation Setup}

The setup described hereafter is used as a basis for all simulations.
Consider signals of the model $\mathcal{M}_{1}$ with $M=3$, $f_{\text{Nyq}}=10$GHz, $\theta=0^{\circ}$ and $B=50$MHz. The carrier frequencies $f_{i}$ are drawn uniformly at random from $[-\frac{f_{\text{Nyq}}-B}{2},\frac{f_{\text{Nyq}}-B}{2}]$.
In our ULA based system, the received signal at each sensor is given
by (\ref{eq:u_n(t)}). In each sensor, the received signal is corrupted
by uncorrelated additive white Gaussian noise (AWGN) $\eta_{n}(t)$,
such that the signal at the $n$th sensor is given by $\tilde{u}_{n}(t)=u_{n}(t)+\eta_{n}(t)$.
For the MWC system, the received signal is the sum of the transmissions with AWGN, namely $u(t)=\sum_{i=1}^{M}s_{i}(t)e^{j2\pi f_{i}t}+ \eta(t)$. Here, all channels are corrupted by the same noise $\eta(t)$ since they all belong to one unique sensor. The noises $\eta(t)$ and $\eta_n(t), 0 \leq n \leq N-1$ are assumed to have the same variance.

In all the simulations,
we use $f_{s}=f_{p}=1.3B$ (if not mentioned otherwise). For the ULA
based system, we use a periodic function $p\left(t\right)$ such that
$P\left(f\right)=\sum_{l=-\infty}^{\infty}\delta\left(f-lf_{p}\right)$.
In the MWC, $p_{i}(t)$ are chosen as piecewise constant functions
alternating between the levels $\pm1$ with sequences generated uniformly
at random. The system performance is measured by computing the MSE
between the original and reconstructed signals, i.e. $\text{MSE}=||u-\hat{u}||^{2}$ normalized to the length of $u$.
For the simulations, we estimate $\mathbf{R}$ as in (\ref{eq:Rest}).

To set similar conditions for both systems (MWC and ULA based), we
use the same parameters, i.e the number of source signals $M$, the number of snapshots $Q$, SNR, $f_{p}=f_{s}$, and $N$, which in the ULA denotes the sensors number and in the MWC denotes the channels
number. The same signal is fed to both systems at each realization
of the simulations. The results are averaged over $2000$ realizations, where in each realization, the signals, carriers and noises are generated at random.

\subsection{Signal Reconstruction}

Figure \ref{d simulation} presents the performance of
the ULA based system as a function of $d$. As shown in Theorem \ref{thm: the equation uniquness},
we require $d\leq\frac{c}{|\cos\left(\theta\right)|\cdot f_{\text{Nyq}}}$,
which in our setting translates to $d\leq\frac{3\cdot10^{8}}{10^{10}}=0.03[m]$.
This property of the system geometry is clearly demonstrated in Fig.
\ref{d simulation}, where we observe a monotonic decrease in the
performance starting from $d=0.03$, for both reconstruction methods,
MMV and ESPRIT. The decrease in performance below $d=0.03$
stems from the fact that the closer the sensors, the more correlated their samples. In the following simulations, $d$ is set to $d=0.03$.

\begin{figure}
\includegraphics[width = 0.5\textwidth]{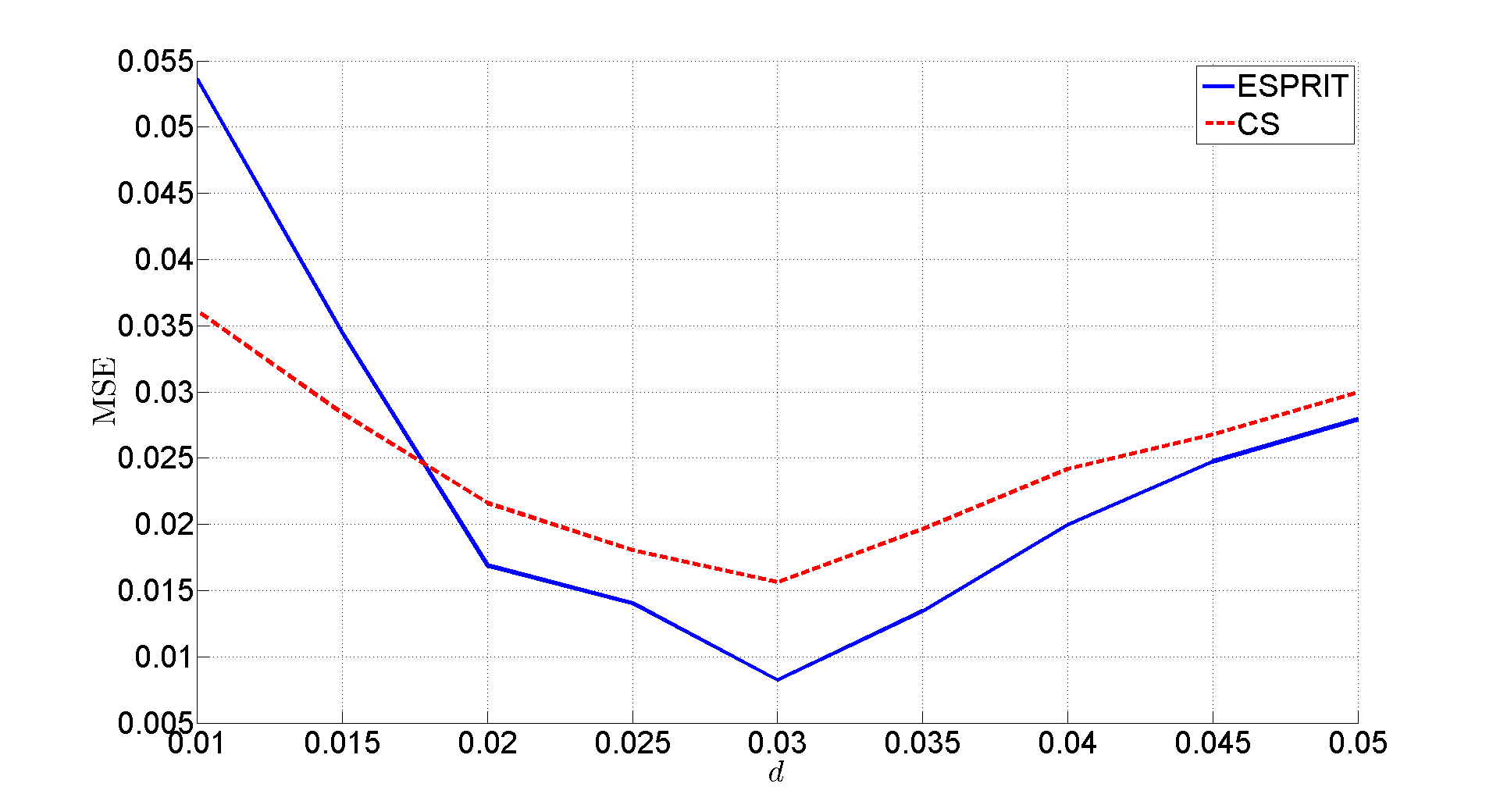}

\protect\caption{Influence of the distance $d$ between adjacent sensors with $M=3$, $N=10$, $Q=400$,
and $\text{SNR}=10$dB.\label{d simulation}}
\end{figure}

We next examine the effect of $f_{p}$. From Theorem \ref{prop:signals-recovery:-given}, $f_{p}$
must be greater than the transmissions bandwidth $B$. When $f_{s}=f_{p}<B$, mixing the signal $u\left(t\right)$ with $p\left(t\right)$ results in aliasing of $u\left(t\right)$, as adjacent shifted copies
of the source signal overlap. Each spectral bin overlaps
with two others over a bandwidth $b=B-f_{p}$ each.
It follows that we reconstruct the aliased version of each signal, that is only $B-2b$ of each source signal's support is perfectly recovered, while the remaining $2b$ are corrupted. Therefore, the reconstruction performance depends on
$\frac{f_{p}}{B}$. In particular, if $\frac{f_{p}}{B}\leq\frac{1}{2}$,
no reconstruction at all is possible. This phenomenon is demonstrated in Fig. \ref{Fp simulation}.

\begin{figure}
\includegraphics[width = 0.5\textwidth]{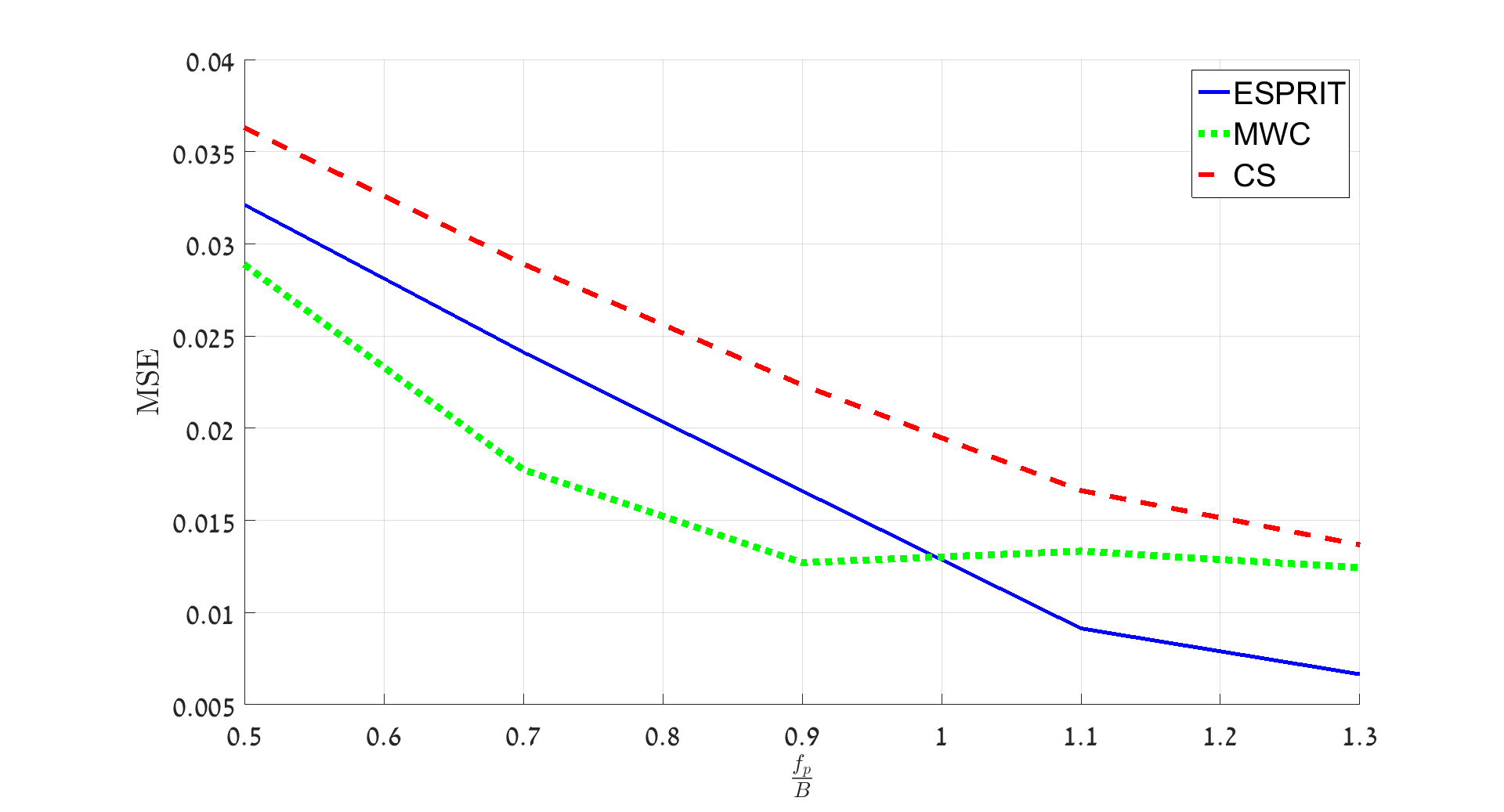}

\protect\caption{Influence of the ratio $\nicefrac{f_{p}}{B}$, with $M=3$, $N=10$, $Q=400$,
and $\text{SNR}=10$dB. \label{Fp simulation}}
\end{figure}

The third experiment examines the influence of the number of sensors
$N$. A large amount of sensors increases the system's robustness to noise and allows it to handle a greater amount of source signals. This parameter
is equivalent to the number of channels in the MWC system. From Fig. \ref{N simulation}, it can be seen that the reconstruction
error decreases with more sensors. In this setting, the minimal number of sensors is $N=2M=6$.

\begin{figure}
\includegraphics[width = 0.5\textwidth]{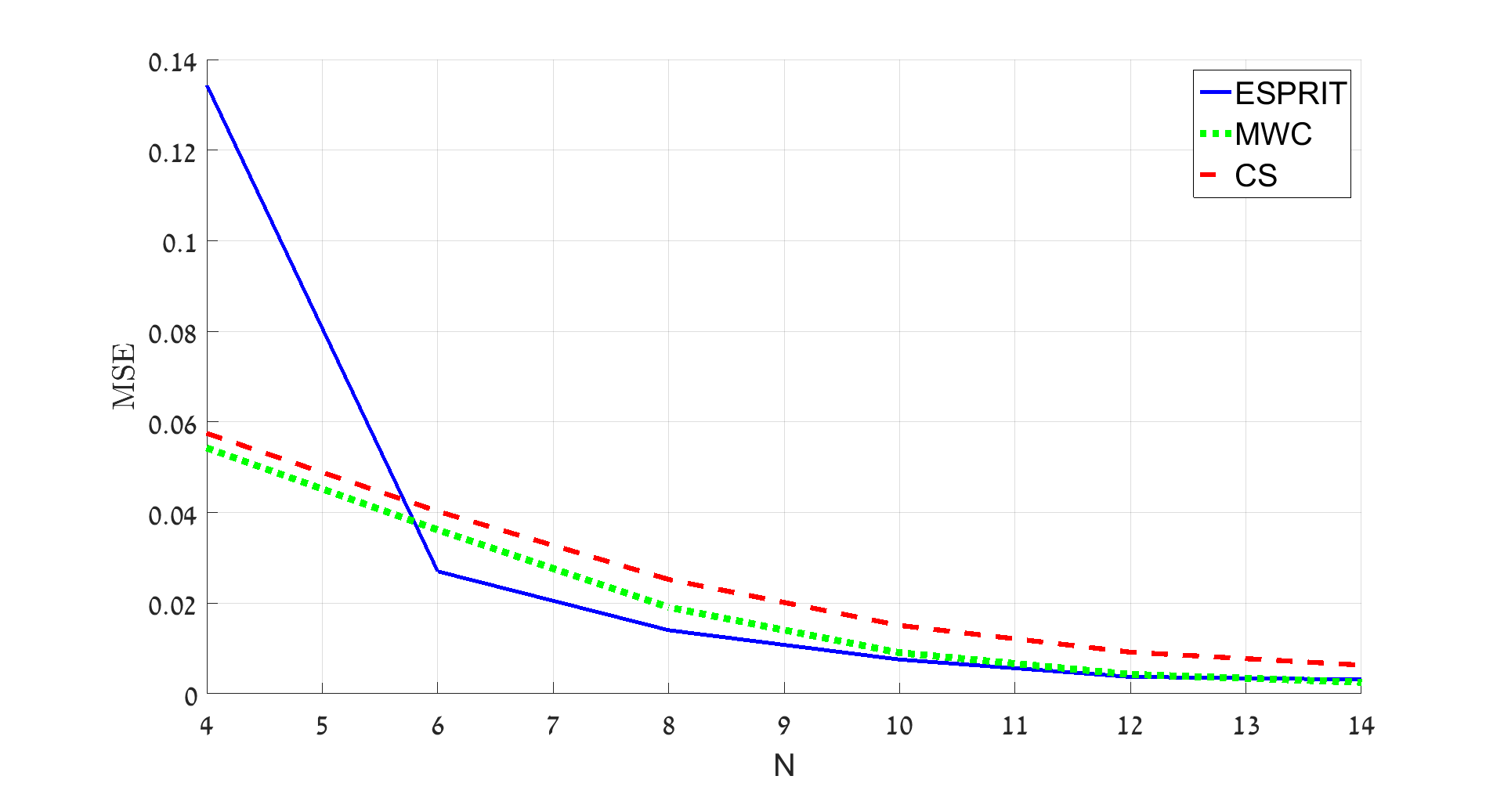}

\protect\caption{Influence of the number of sensors $N$, with $M=3$, $Q=400$,
$\text{SNR}=10$dB. \label{N simulation}}
\end{figure}

The influence of the number of snapshots $Q$ is investigated in the
next experiment. As shown in Fig. \ref{Q simulation}, the performance of ESPRIT improves with the number of snapshots. A small amount of snapshots can yield $\dim\left(\mbox{span}\left(\mathbf{w}\right)\right)<M$.
In this case, referred to as the correlated case, we need to construct
a smoothed correlation matrix (\ref{eq:smoothedR}) on which we can
apply ESPRIT, as discussed in Section \ref{sec:rec1}. This setting can be useful when only carrier frequencies
recovery is needed, as it enables good frequency recovery with few
samples. In this experiment, we set $M=8$ and use a low number
of snapshots. In Fig. \ref{few Q simulation}, we observe that for $Q\leq M=8$
the smoothing algorithm yields better performance than the traditional ESPRIT.
\begin{figure}
\includegraphics[width = 0.5\textwidth]{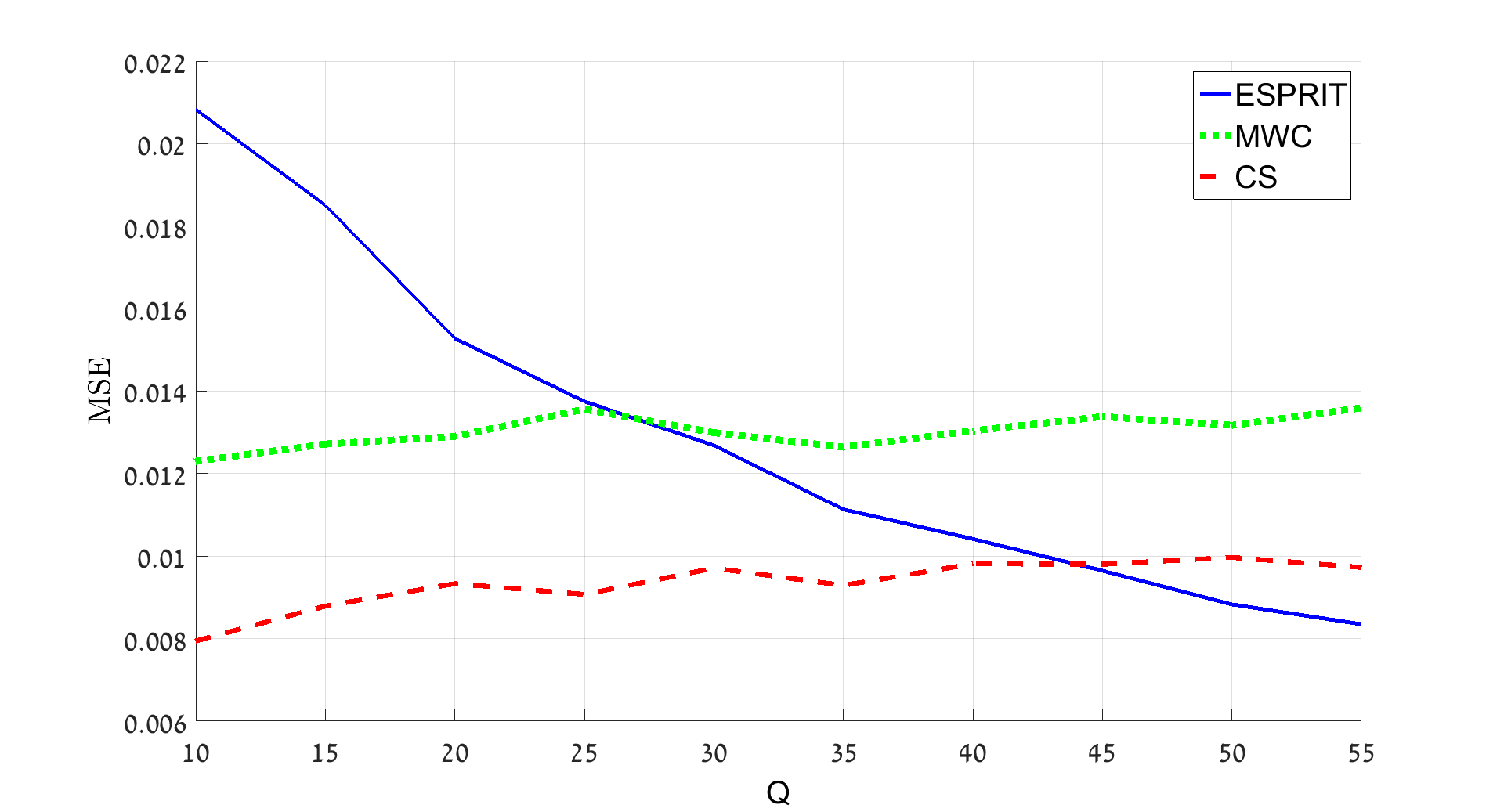}

\protect\caption{Signal reconstruction performance vs. $Q$, with $M=3$, $\text{SNR}=10$dB,
$N=8$. \label{Q simulation}}
\end{figure}

\begin{figure}
\includegraphics[width = 0.5\textwidth]{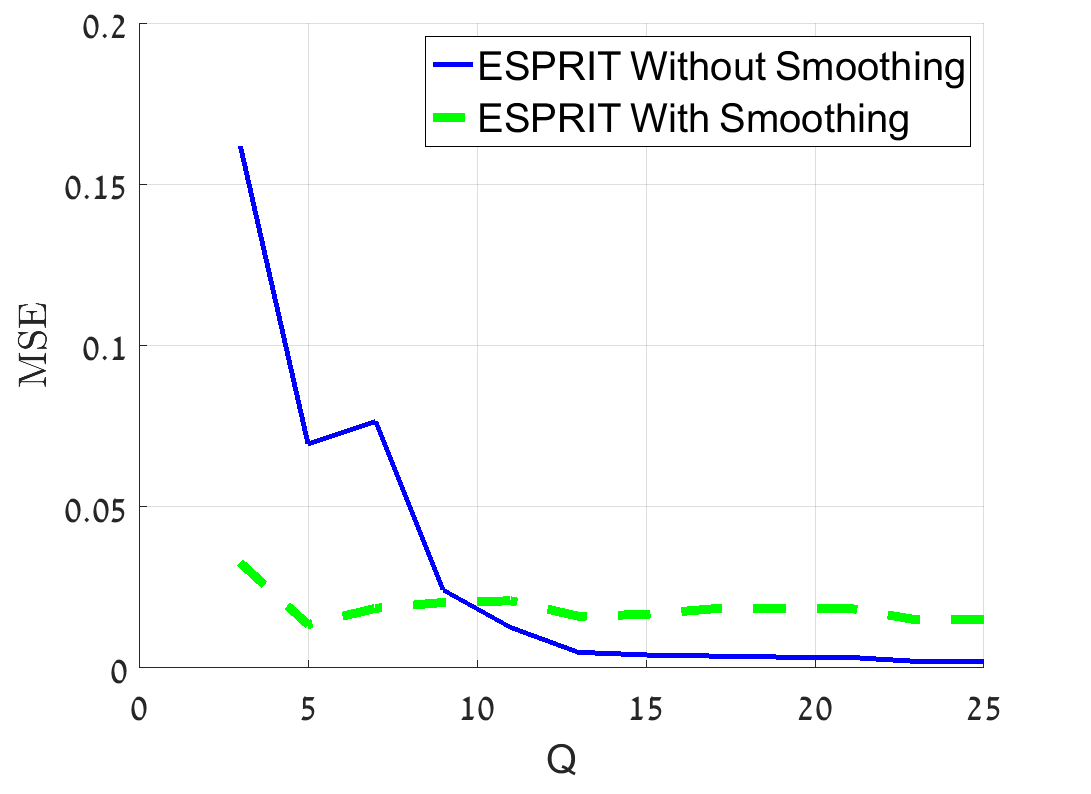}

\protect\caption{Correlated vs. uncorrelated case with $M=8$, $N=25$, $\text{SNR}=20$dB. \label{few Q simulation}}
\end{figure}

The next simulation tests the reconstruction performance under different
SNR conditions. When dealing with low SNR scenarios, grid search
algorithms (such as MMV) are known to outperform analytic algorithms
such as ESPRIT, as illustrated in Fig. \ref{SNR simulation}.

\begin{figure}
\includegraphics[width = 0.5\textwidth]{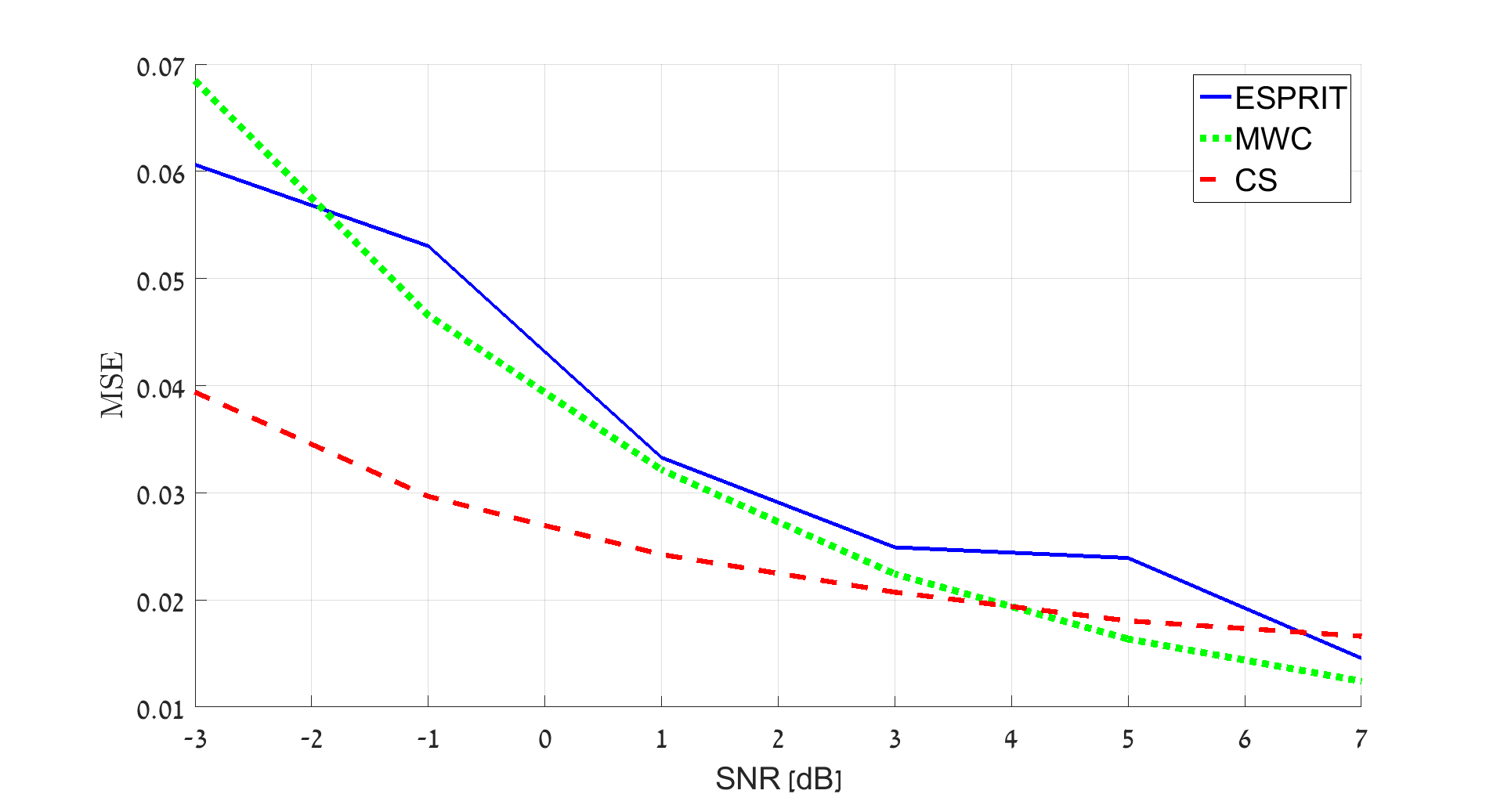}

\protect\caption{Influence of SNR on complex-valued signal reconstruction performance, with $M=3$, $N=10$, $Q=400$. \label{SNR simulation}}
\end{figure}


The simulations demonstrate that our system outperforms the MWC, in particular in low SNR regimes. Besides, in such settings, the reconstruction error of the CS approach is typically lower than that of ESPRIT, which is an analytic method. In the presence of enough samples, originated by increasing the sampling rate $f_s$, the number of sensors $N$ or snapshots $Q$, ESPRIT achieves better results.

\subsection{Carrier Frequency Recovery}

We now consider the case where only the carrier frequencies are recovered,
which can be relevant for various applications such as CR. Here, we
use the following performance measure, $\frac{1}{Mf_{\text{Nyq}}}\sum_{i=1}^{M}\left|f_{i}-\hat{f}_{i}\right|$.
We demonstrate that, for this purpose, a lower sampling rate can be
used. We sample the data at the cut-off frequency of the LPF $f_{s}<f_{p}$, which causes loss of information. Since lower
sampling rate yields fewer samples for a given sensing time, we use
the correlated case or smoothing approach. In the first experiment, we examine
different sampling ratios $\frac{f_{s}}{f_{p}}$. Figure \ref{fs simulation}
demonstrates that even for very low ratios, carrier
frequency recovery yields low error, which decreases as the ratio grows, as expected.
The second simulation, presented in Fig. \ref{fs2 simulation}, shows
the impact of SNR for a fixed sampling ratio $\frac{f_{s}}{f_{p}}=0.2$.

\begin{figure}
\includegraphics[width = 0.5\textwidth]{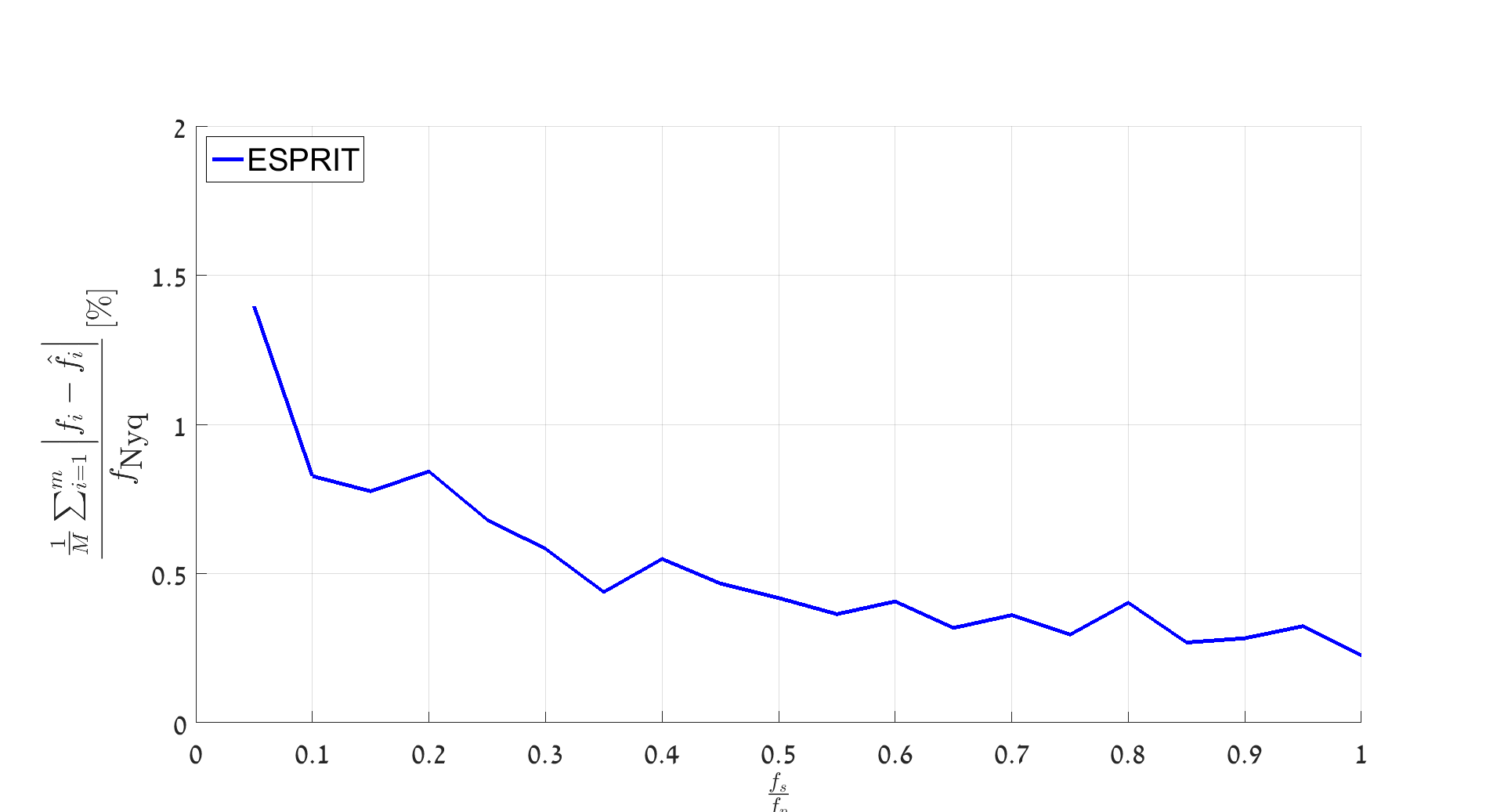}

\protect\caption{Influence of the ratio $\nicefrac{f_{s}}{f_{p}}$ on carrier frequency reconstruction performance, with $M=3$, $N=8$, $Q=400$, $\text{SNR}=10$dB. \label{fs simulation}}
\end{figure}

\begin{figure}
\includegraphics[width = 0.5\textwidth]{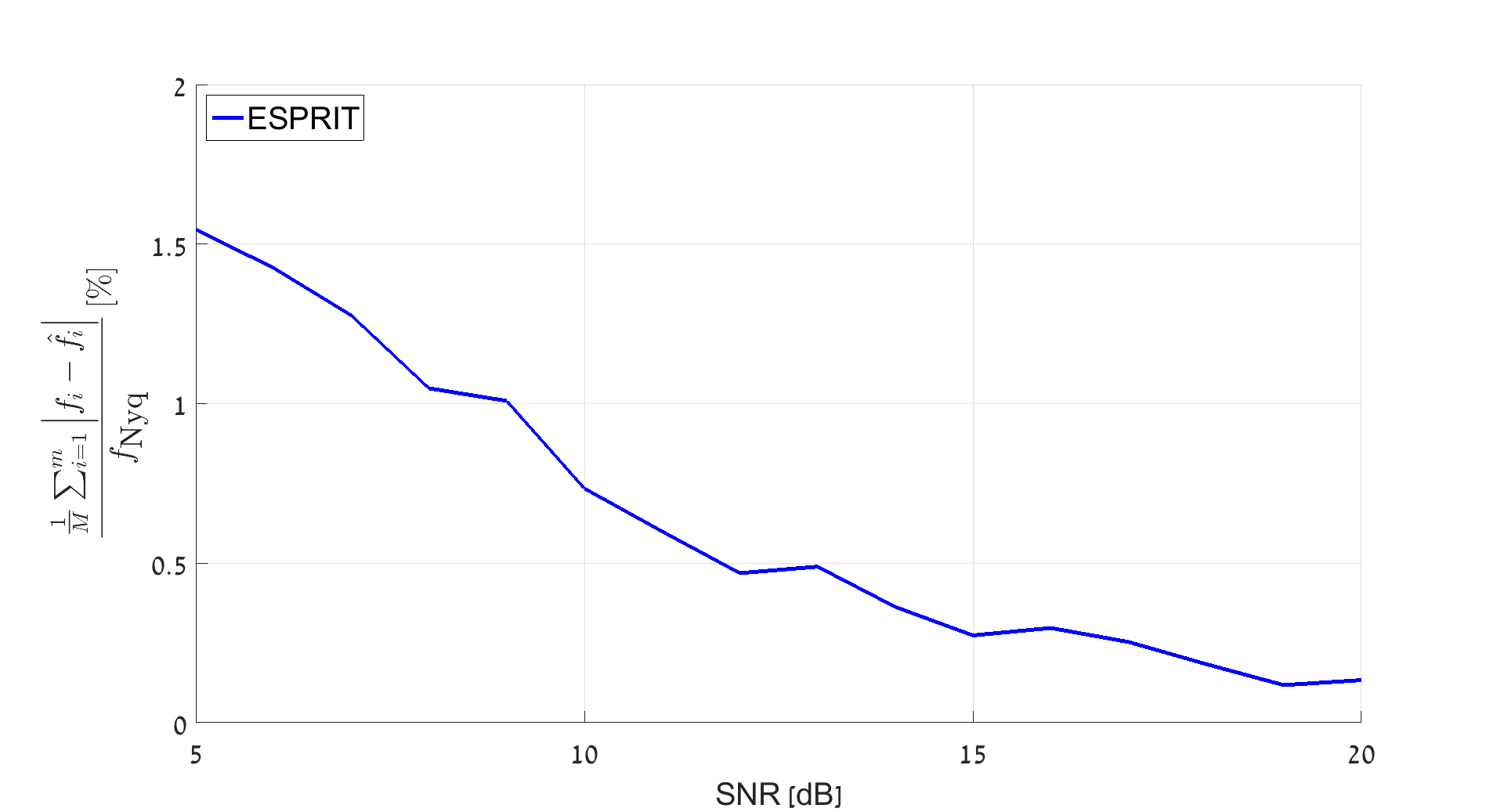}

\protect\caption{Influence of SNR on carrier frequency reconstruction performance, $M=3$, $N=8$, $Q=400$, $\nicefrac{f_{s}}{f_{p}}=0.2$. \label{fs2 simulation}}
\end{figure}

\section{Joint Spectrum Sensing and DOA Recovery}
\label{sec:joint}

We now show how our ULA based system can be expanded to allow for joint recovery of the carrier frequencies and the AOAs.  This is the main advantage of our system with respect to the MWC. We present the compressed carrier and DOA estimation (CaSCADE) system, consisting of an L-shaped array composed of two orthogonal ULAs with an identical sampling scheme. 

Specifically, we consider the problem where the source signals $s_{i}(t), 1 \leq i \leq M$
have both unknown and different carrier frequencies $f_i$ and AOAs $\theta_i$. The main
difference between this scenario and the one that has been discussed
in the previous sections is the additional unknown AOA vector $\boldsymbol{\theta}=\left[\theta_{1}, \theta_{2},\cdots, \theta_{M}\right]^{T}$.
This problem can be treated as a 2D-DOA recovery problem, where two
angles are traditionally recovered. In our case, the second variable
is the signal's carrier frequency instead of an additional angle. The
2D-DOA problem requires both finding the two unknown angles and pairing
them. Previous work \cite{Jgu07, Jgu15} suggests a modification to the
ESPRIT algorithm, that achieves automatic pairing between the two
estimated factors, by simultaneous singular value decomposition (SVD) of
two cross-correlation matrices. 
We further develop this approach, derived in the Nyquist regime, to perform recovery from sub-Nyquist samples. 

\subsection{Signal Model}

In this scenario, for the sake of simplicity, we consider a statistical model.
Let $u\left(t\right)$ and $s_{i}\left(t\right)$ be defined as in
the previous section, with Fourier transforms $U\left(f\right)$ and
$S_{i}\left(f\right)$, accordingly. The signals $s_{i}\left(t\right)$
are considered to be within the $xz$ plane and associated with an AOA
$\theta_{i}$, where $\theta_{i}$ is measured from the positive side
of the $x$ axis. All signals are assumed to be far-field, non coherent,
wide-sense stationary with zero mean and uncorrelated, i.e. $\forall t,\, \mathbb{E}\left[s_{i}\left(t\right)\bar{s}_{j}\left(t\right)\right]=0$
for $i\neq j$, with $\sigma^2_i = \mathbb{E}\left[s_{i}^{2}(t)\right] \neq 0$. Fig. \ref{fig:XZ Plane} illustrates our signal model. To ensure an array structure deprived of ambiguity, we assume that the electronic angles, namely $f_i \cos (\theta_i)$ and $f_i \sin (\theta_i)$, are distinct \cite{multiESPRIT, kikuchi}, namely
\begin{eqnarray}
f_i \cos (\theta_i) \neq f_j \cos (\theta_j), \nonumber \\ 
f_i \sin (\theta_i) \neq f_j \sin(\theta_j),
\label{eq:distinct_angles}
\end{eqnarray}
for $i \neq j$.

\begin{figure}
\begin{centering}
\fbox{%
\begin{minipage}[t]{1\columnwidth}%
\begin{center}
\includegraphics[width = 0.75\textwidth]{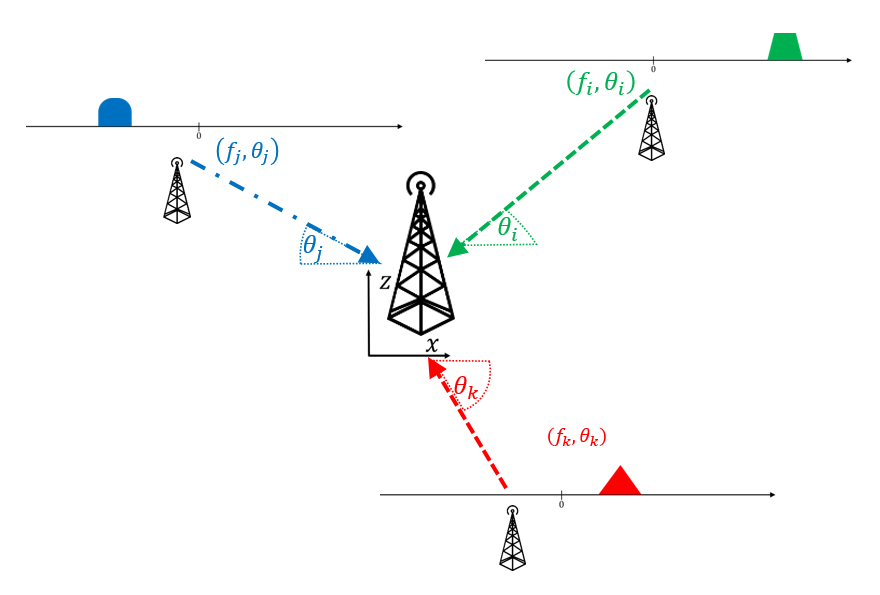}
\par\end{center}%
\end{minipage}}
\par\end{centering}

\protect\caption{Example of $M=3$ source signals in the $xz$ plane. Each transmission is associated with a carrier
frequency $f_{i}$ and AOA $\theta_{i}$.\label{fig:XZ Plane}}
\end{figure}

\begin{definition} The set $\mathcal{M}_{2}$ contains all signals
$u(t)$, such that the support of the Fourier transform
$U(f)$ is contained within a union of $M$ disjoint intervals
in $\mathcal{F}$. Each of the bandwidths does not exceed $B$ and
the transmissions composing $u(t)$ are wide-sense stationary, zero mean and uncorrelated and
have unknown and distinct AOAs $|\theta_{i}|<90^{\circ}$, such that (\ref{eq:distinct_angles}) holds. \end{definition}

In this section, we wish to design a sampling and reconstruction system
which allows for perfect blind signal reconstruction, i.e. recovery
of $\boldsymbol{\theta}, \boldsymbol{f}, \mathbf{s}(t)$, where $\boldsymbol{\theta}$ denotes the AOAs vector defined above and $\boldsymbol{f}$, $\mathbf{s}(t)$ are defined in Section \ref{sec:model}, without any prior knowledge on the carrier frequencies nor the AOAs.

\subsection{CaSCADE System Description \label{sub:ULA description}}
Each transmission $s_{i}\left(t\right)$ impinges on an L-shaped array
with $2N-1$ sensors ($N$ sensors along the $x$ axis and $N$ sensors along the $z$ axis including a common sensor at the origin) in the $xz$ plane with its corresponding AOA
$\theta_{i}$, as shown in Fig.~\ref{fig:L-Shape}. All the sensors
have the same sampling pattern as described in Section \ref{sec:Sampling-Scheme}. In the following sections, we demonstrate that in this case the minimal number of sensors required is $2M$. This leads to a minimal sampling rate of $2MB$ which
is assumed to be less then $f_{\text{Nyq}}$.

\begin{figure}
\begin{centering}
\fbox{%
\begin{minipage}[t]{1\columnwidth}%
\begin{center}
\includegraphics[width = 0.6\textwidth]{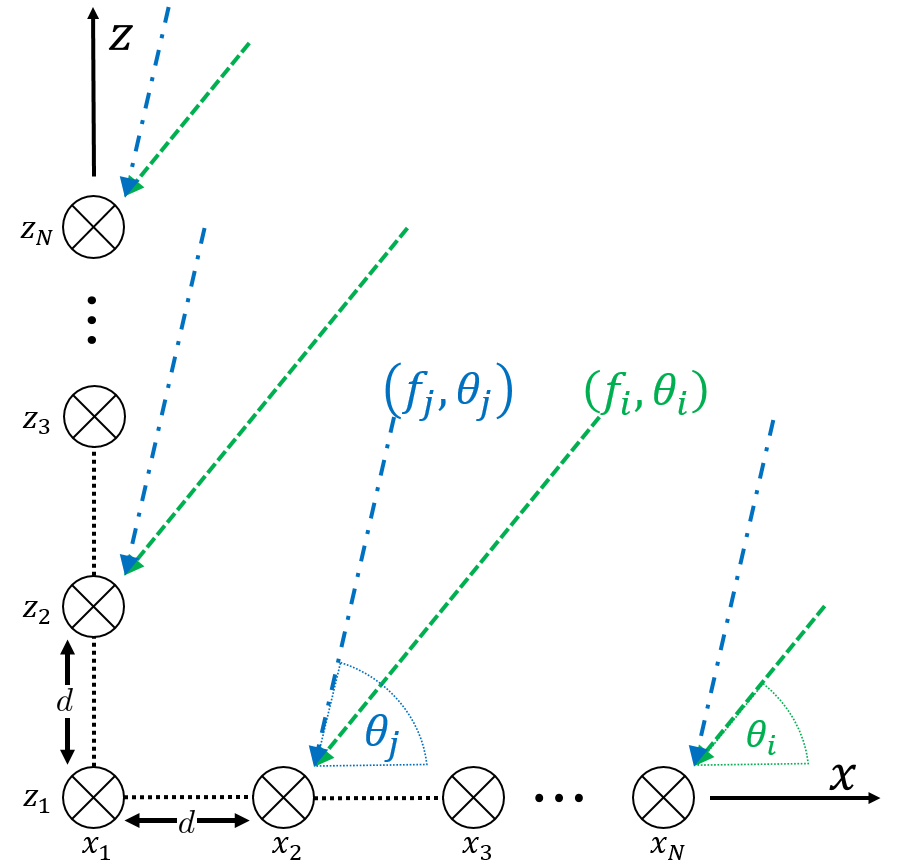}
\par\end{center}%
\end{minipage}}
\par\end{centering}

\protect\caption{CaSCADE system: L-shaped array with $N$ sensors along the $x$ axis and $N$ sensors along the $z$ axis including a common sensor at the origin. \label{fig:L-Shape}}
\end{figure}

By treating the L-shaped array as two orthogonal ULAs, one along the
$x$ axis and the other along the $z$ axis, we form two systems of equations, following the derivations of Section \ref{sec:der1}.
For the ULA along the $x$ axis, we obtain
\begin{equation}
\label{eq:Xeq}
\mathbf{X}(f)=\mathbf{A}_{x} \mathbf{W}(f), \quad f \in \mathcal{F}_s,
\end{equation}
where
\begin{equation}
\mathbf{A}_{x} =
\left[\begin{matrix}e^{j2\pi f_{1}\tau_{1}^{x}(\theta_{1})} & \cdots & e^{j2\pi f_{M}\tau_{1}^{x}(\theta_{M})}\\
\vdots &  & \vdots\\
\\
e^{j2\pi f_{1}\tau_{N}^{x}(\theta_{1})} & \cdots & e^{j2\pi f_{M}\tau_{N}^{x}(\theta_{M})}
\end{matrix}\right].
\end{equation}
Similarly, along the $z$ axis, we get
\begin{equation}
\label{eq:Zeq}
\mathbf{Z}(f)=\mathbf{A}_{z} \mathbf{W}(f), \quad f \in \mathcal{F}_s,
\end{equation}
where $\mathbf{A}_{z}$ is defined accordingly.
Here, $\tau_{n}^{x}\left(\theta\right)=\frac{dn}{c}\cos\left(\theta\right)$, $\tau_{n}^{z}\left(\theta\right)=\frac{dn}{c}\sin\left(\theta\right)$ and the matrices $\mathbf{A}_x$ and $\mathbf{A}_z$ thus depend on both the unknown carrier frequencies $\boldsymbol{f}$ and AOAs $\boldsymbol{\theta}$, namely $\mathbf{A}_x= \mathbf{A}_x\left(\boldsymbol{f},\boldsymbol{\theta}\right)$ and $\mathbf{A}_z= \mathbf{A}_z\left(\boldsymbol{f},\boldsymbol{\theta}\right)$.
In the time domain,
\begin{eqnarray}
\mathbf{x}[k] & = & \mathbf{A}_{x}\mathbf{w}[k], \quad k \in \mathbb{Z} \label{eq:xVsw}\\
\mathbf{z}[k] & = & \mathbf{A}_{z}\mathbf{w}[k], \quad k \in \mathbb{Z}, \label{eq:zVsw}
\end{eqnarray}
where \textbf{$\mathbf{x}[k]$} and $\mathbf{z}[k]$ are the samples for the $x$ and $z$ axis, respectively, and $\mathbf{w}[k]$ is a vector of length $M$ with $i$th element $w_{i}[k]$. In the
following sections, we discuss two possible methods to recover $\boldsymbol{f}$ and $\boldsymbol{\theta}$, present
sufficient conditions to recover the transmissions $\mathbf{s}\left(t\right)$
from $\mathbf{w}[k]$, and provide concrete reconstruction algorithms.

\subsection{Joint ESPRIT Recovery \label{sub:Joint-SVD-ESPRIT}}

We now extend the ESPRIT approach to a 2D setting, in order to jointly recover two parameters, $f_i$ and $\theta_i$, for each transmission. Once these are estimated, the transmissions $s_i(t)$ can be recovered from (\ref{eq:rec_sig}) and (\ref{eq:sVsw}) with the observation matrix $\mathbf{A}= \left[ \begin{array}{l} \mathbf{A}_x \\ \mathbf{A}_z \end{array} \right]$ and the concatenated vector of measurements $\left[ \begin{array}{l} \mathbf{X}(f) \\ \mathbf{Z}(f) \end{array} \right]$.

Consider two sub-arrays of size $N-1$ along each of the $x$ and $z$ axis. The
first sub-array along the $x$ axis consists of sensors $\left\{ 1,...,N-1\right\} $. The second sub-array is
composed of the last $N-1$ sensors along the same axis, i.e. sensors
$\left\{ 2,...,N\right\} $. The sub-arrays along
the $z$ axis are similarly defined. Dropping the time variable $k$ for clarity, we can then write:
\begin{eqnarray}
\mathbf{x}_{1} = \mathbf{A}_{x_{1}}\mathbf{w}, & \quad & \mathbf{x}_{2} =  \mathbf{A}_{x_{2}}\mathbf{w} \nonumber \\
\mathbf{z}_{1} = \mathbf{A}_{z_{1}}\mathbf{w}, & \quad & \mathbf{z}_{2} = \mathbf{A}_{z_{2}}\mathbf{w},
\end{eqnarray}
where $\mathbf{x}_{1}$ and $\mathbf{A}_{x_{1}}$ are the
first $N-1$ rows of $\mathbf{x}$ and $\mathbf{A}_{x}$
respectively and $\mathbf{x}_{2}$ and $\mathbf{A}_{x_{2}}$
are the last $N-1$ rows of $\mathbf{x}$ and $\mathbf{A}_{x}$
respectively. The vectors $\mathbf{z}_1$, $\mathbf{z}_2$ and matrices $\mathbf{A}_{z_1}$, $\mathbf{A}_{z_2}$ are similarly defined.

Each couple of sub-array matrices along the same axis are related as follows:
\begin{eqnarray}
\mathbf{A}_{x_{2}} & = & \mathbf{A}_{x_{1}}\mathbf{\Phi} \nonumber \\
\mathbf{A}_{z_{2}} & = & \mathbf{A}_{z_{1}}\mathbf{\Psi},
\end{eqnarray}
where
\begin{eqnarray}
\label{eq:phipsi}
\mathbf{\Phi} & \triangleq & \mbox{diag}\left[\begin{matrix}e^{j2\pi f_{1}\tau_{1}^{x}(\theta_{1})} & \cdots & e^{j2\pi f_{M}\tau_{1}^{x}(\theta_{M})}\end{matrix}\right] \nonumber \\
\mathbf{\Psi} & \triangleq & \mbox{diag}\left[\begin{matrix}e^{j2\pi f_{1}\tau_{1}^{z}(\theta_{1})} & \cdots & e^{j2\pi f_{M}\tau_{1}^{z}(\theta_{M})}\end{matrix}\right].
\end{eqnarray}
We can see from (\ref{eq:phipsi}) that the carrier frequencies $f_i$ and AOAs $\theta_i$ are embedded in the diagonal matrices $\bf \Phi$ and $\bf \Psi$. Our goal is thus to jointly recover these matrices in order to be able to pair the corresponding elements $f_{i}\tau_{1}^{x}(\theta_{i})$ and $f_{i}\tau_{1}^{z}(\theta_{i})$. We then show how $f_i$ and $\theta_i$ can be estimated from $\mathbf{\Phi}$ and $\mathbf{\Psi}$ for all $1 \leq i \leq M$.

To this end, we apply the ESPRIT framework to cross-correlation matrices between the sub-arrays of both axis. Consider the following correlation matrices:
\begin{eqnarray} \label{eq:cross}
\mathbf{R}_{1} & \triangleq & \mathbb{E}\left[\mathbf{x}_{1}\mathbf{z}_{1}^{H}\right]=\mathbf{A}_{x_{1}}\mathbf{R}_{w}\mathbf{A}_{z_{1}}^{H}, \nonumber \\
\mathbf{R}_{2} & \triangleq & \mathbb{E}\left[\mathbf{x}_{2}\mathbf{z}_{1}^{H}\right]=\mathbf{A}_{x_{2}}\mathbf{R}_{w}\mathbf{A}_{z_{1}}^{H}=\mathbf{A}_{x_{1}}\mathbf{\Phi}\mathbf{R}_{w}\mathbf{A}_{z_{1}}^{H},\nonumber \\
\mathbf{R}_{3} & \triangleq & \mathbb{E}\left[\mathbf{x}_{1}\mathbf{z}_{2}^{H}\right]=\mathbf{A}_{x_{1}}\mathbf{R}_{w}\mathbf{A}_{z_{2}}^{H}=\mathbf{A}_{x_{1}}\mathbf{R}_{w}\mathbf{\Psi}^{H}\mathbf{A}_{z_{1}}^{H}.
\end{eqnarray}
Since the transmissions $s_i(t)$ are assumed to be uncorrelated, $\mathbf{R}_{w}$ is diagonal. In addition, since $\sigma_i^2 \neq 0$, $(\mathbf{R}_{w})_{ii} \triangleq \mathbb{E} \left[w_i^2[k] \right] \neq 0$ and $\mathbf{R}_{w}$ is invertible. Using the fact that $\mathbf{\mathbf{\Psi}}^{H}$ is diagonal as well, we can write
\begin{eqnarray}
\mathbf{R}_{3} & = & \mathbf{A}_{x_{1}}\mathbf{R}_{w}\mathbf{\Psi}^{H}\mathbf{A}_{z_{1}}^{H}=\mathbf{A}_{x_{1}}\mathbf{\Psi}^{H}\mathbf{R}_{w}\mathbf{A}_{z_{1}}^{H}.
\end{eqnarray}
Define the concatenated covariance matrix
\begin{equation}
\mathbf{R}=\left[\begin{matrix}\mathbf{R}_{1}\\
\mathbf{R}_{2}\\
\mathbf{R}_{3}
\end{matrix}\right]=\left[\begin{matrix}\mathbf{A}_{x_{1}}\\
\mathbf{A}_{x_{1}}\mathbf{\Phi}\\
\mathbf{A}_{x_{1}}\mathbf{\Psi}^{H}
\end{matrix}\right]\mathbf{R}_{w}\mathbf{A}_{z_{1}}^{H}.\label{eq:R Matrix}
\end{equation}
The SVD of $\mathbf{R}$ yields
\begin{equation}
\mathbf{R}=\left[\mathbf{U}_{1}\mathbf{U}_{2}\right]\left[\begin{matrix}\boldsymbol{\Lambda} & \boldsymbol{0}\\
\boldsymbol{0} & \boldsymbol{0}
\end{matrix}\right]\mathbf{V}^{H}, \label{eq:R-SVD}
\end{equation}
where $\bf R$ is full column rank, as we show in Lemma \ref{lem:trilinear kruskal rank}.
Then, the columns of the matrix $[\mathbf{U}_{1}\mathbf{U}_{2}]$ are the left singular vectors of $\mathbf{R}$, where $\mathbf{U}_{1}$ contains the vectors corresponding to the first $M$ singular values, $\mathbf{\Lambda}$ is a $M\times M$ diagonal matrix with the $M$ non zero singular values of $\mathbf{R}$, and $\mathbf{V}$ contains the right singular vectors of $\mathbf{R}$.

We now derive sufficient conditions for perfect recovery of $\bf \Phi$ and $\bf \Psi$, up to some joint permutation, from $\mathbf{U}_1$. We then show how $\bf \Phi$ and $\bf \Psi$, and as a consequence $\boldsymbol{f}$ and $\boldsymbol{\theta}$, can be recovered from $\mathbf{U}_1$. First, Lemma \ref{lem:trilinear kruskal rank} provides sufficient conditions so that there exists an invertible $M \times M$ matrix $\bf T$ such that
\begin{equation}
\mathbf{U}_{1}=\left[\begin{matrix}\mathbf{U}_{11}\\
\mathbf{U}_{12}\\
\mathbf{U}_{13}
\end{matrix}\right]=\left[\begin{matrix}\mathbf{A}_{x_{1}}\\
\mathbf{A}_{x_{1}}\mathbf{\Phi}\\
\mathbf{A}_{x_{1}}\mathbf{\Psi}^{H}
\end{matrix}\right]\mathbf{T}, \label{eq:Signal Sub Space}
\end{equation}
where $\mathbf{U}_{1i}$ are $(N-1)\times M$ matrices.

\begin{lem}
\label{lem:trilinear kruskal rank} Let $u\left(t\right)$ be an arbitrary
signal within $\mathcal{M}_{2}$ and consider an L-shaped ULA with $N$ sensors and distance $d$ between two adjacent sensors. If:
\begin{itemize}
\item (c1) $d \leq \frac{c}{f_{\text{Nyq}}}$
\item (c2) $N>M$
\end{itemize}
then (\ref{eq:Signal Sub Space}) holds.\end{lem}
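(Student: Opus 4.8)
The plan is to reduce the claim to two linear-algebra facts: that the column space of the concatenated covariance matrix $\mathbf{R}$ in (\ref{eq:R Matrix}) coincides with the column space of the structured manifold matrix
\begin{equation}
\mathbf{B} \triangleq \left[\begin{matrix}\mathbf{A}_{x_{1}}\\ \mathbf{A}_{x_{1}}\mathbf{\Phi}\\ \mathbf{A}_{x_{1}}\mathbf{\Psi}^{H}\end{matrix}\right],
\end{equation}
and that $\mathbf{B}$ has full column rank $M$. Granting these, note that the columns of $\mathbf{U}_1$ form an orthonormal basis for $\mathrm{range}(\mathbf{R})$, since they are the left singular vectors attached to the $M$ nonzero singular values in (\ref{eq:R-SVD}); hence they span $\mathrm{col}(\mathbf{B})$. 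As the $M$ columns of $\mathbf{B}$ are then themselves a basis of this $M$-dimensional space, there exists a unique invertible $M\times M$ matrix $\mathbf{T}$ with $\mathbf{U}_1=\mathbf{B}\mathbf{T}$, which is precisely (\ref{eq:Signal Sub Space}).

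First I would establish that $\mathbf{A}_{x_1}$ has full column rank $M$. Its $(n,i)$ entry is $e^{j2\pi f_i \tau_n^x(\theta_i)}=g_i^{\,n}$ with $g_i\triangleq e^{j2\pi f_i (d/c)\cos\theta_i}$, so $\mathbf{A}_{x_1}$ is Vandermonde (up to a per-column scaling that does not affect rank) in the generators $g_i$. Under (c1), together with $|f_i|\le f_{\text{Nyq}}/2$ and $|\cos\theta_i|\le 1$, one has $|2\pi f_i (d/c)\cos\theta_i|\le \pi$, so every phase lies in $(-\pi,\pi]$ and the exponential is injective on this range. The distinctness assumption $f_i\cos\theta_i\neq f_j\cos\theta_j$ from (\ref{eq:distinct_angles}) then forces the $g_i$ to be pairwise distinct, and with $M\le N-1$ guaranteed by (c2) the $(N-1)\times M$ matrix $\mathbf{A}_{x_1}$ has rank $M$. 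Since $\mathbf{A}_{x_1}$ is the top row-block of $\mathbf{B}$, any $\mathbf{v}$ with $\mathbf{B}\mathbf{v}=\mathbf{0}$ already satisfies $\mathbf{A}_{x_1}\mathbf{v}=\mathbf{0}$, hence $\mathbf{v}=\mathbf{0}$; thus $\mathbf{B}$ inherits full column rank $M$.

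Next I would identify $\mathrm{col}(\mathbf{R})$ by factoring $\mathbf{R}=\mathbf{B}\,\mathbf{D}$ with $\mathbf{D}\triangleq\mathbf{R}_w\mathbf{A}_{z_1}^{H}$, as read off from (\ref{eq:R Matrix}). It suffices to check that $\mathbf{D}$ has full row rank $M$, for then $\mathrm{range}(\mathbf{D})=\mathbb{C}^M$ and $\mathrm{col}(\mathbf{R})=\mathbf{B}\big(\mathrm{range}(\mathbf{D})\big)=\mathbf{B}(\mathbb{C}^M)=\mathrm{col}(\mathbf{B})$, an $M$-dimensional space. Repeating the Vandermonde argument for $\mathbf{A}_{z_1}$, whose generators are $e^{j2\pi f_i (d/c)\sin\theta_i}$, now using the second line of (\ref{eq:distinct_angles}) together with $|\sin\theta_i|\le 1$ and (c1), shows $\mathbf{A}_{z_1}$ has full column rank $M$, so $\mathbf{A}_{z_1}^{H}$ has full row rank $M$. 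Combined with the invertibility of the diagonal $\mathbf{R}_w$ (already noted, since $(\mathbf{R}_w)_{ii}=\sigma_i^2\neq 0$), the product $\mathbf{D}$ has full row rank $M$, which secures both required facts and hence the lemma.

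The main obstacle is the passage from the electronic-angle distinctness in (\ref{eq:distinct_angles}) to genuine distinctness of the Vandermonde generators: this is exactly where the spacing bound (c1) is indispensable, as it confines all phases to a single interval of length $2\pi$ and thereby rules out the spatial aliasing that would otherwise identify $e^{j2\pi f_i (d/c)\cos\theta_i}$ with $e^{j2\pi f_j (d/c)\cos\theta_j}$ even when $f_i\cos\theta_i\neq f_j\cos\theta_j$. Everything else is routine rank bookkeeping. I would also remark that, despite the lemma's name, no genuine trilinear or Kruskal-rank estimate is actually required for this particular statement, since the single top block $\mathbf{A}_{x_1}$ already guarantees full column rank of $\mathbf{B}$; the trilinear structure of $\mathbf{B}$ becomes essential only later, when $\mathbf{\Phi}$ and $\mathbf{\Psi}$ must be extracted and paired from $\mathbf{U}_1$.
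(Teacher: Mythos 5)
Your proof is correct and follows essentially the same route as the paper's: both arguments rest on the full column rank of the Vandermonde matrices $\mathbf{A}_{x_1}$ and $\mathbf{A}_{z_1}$ (guaranteed by (c1) together with the electronic-angle distinctness in (\ref{eq:distinct_angles})) and the invertibility of the diagonal $\mathbf{R}_w$, and then identify $\mathrm{span}(\mathbf{U}_1)$ with the column space of the structured block matrix to produce the invertible $\mathbf{T}$. The only (immaterial) difference is that you work directly with $\mathrm{col}(\mathbf{R})=\mathrm{col}(\mathbf{B})$ via the full-row-rank factor $\mathbf{R}_w\mathbf{A}_{z_1}^{H}$, whereas the paper reaches the same conclusion dually through the null-space relation $\mathbf{B}\mathbf{U}_2=\boldsymbol{0}$; your version also spells out the anti-aliasing role of (c1) more explicitly than the paper does.
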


\begin{IEEEproof}
We begin by showing that under conditions (c1)-(c2), $\bf R$ is full rank. From (\ref{eq:distinct_angles}), both the $(N-1) \times M$ matrices $\mathbf{A}_{x_1}$ and $\mathbf{A}_{z_1}$ are Vandermonde with distinct columns, and are thus full column rank. The $M \times N$ matrix $\mathbf{R}_{w}$ is diagonal and invertible. It follows that $\mathbf{R}_{1}$ and $\mathbf{R}$ are full column rank.

The SVD decomposition of $\mathbf{R}$ yields (\ref{eq:R-SVD}). 
In particular, it holds that $\mathbf{R}^{H}\mathbf{U}_{2}=\boldsymbol{0}$, where $\bf R$ is of size $3(N-1) \times N-1$ with $\text{rank}(\mathbf{R})=M$ and the $3(N-1) \times (3(N-1)-M)$ matrix $\mathbf{U}_2$ is in the null space of $\bf R$. That is
\[
\mathbf{A}_{z_{1}}\mathbf{R}_{w} \mathbf{B} \mathbf{U}_{2}=\boldsymbol{0},
\]
where 
\[
\mathbf{B}=\left[\begin{matrix}\mathbf{A}^H_{x_{1}} & \mathbf{\Phi}^{H}\mathbf{A}_{x_{1}}^{H} &
\mathbf{\Psi}^H\mathbf{A}_{x_{1}}^{H}
\end{matrix}\right].
\]
Since $\mathbf{A}_{z_{1}} \mathbf{R}_{w}$ is full rank, it follows that $\mathbf{B} \mathbf{U}_{2}=\boldsymbol{0}$. Besides $\text{rank}(\mathbf{B}) = \text{rank}(\mathbf{U}_1)$; this implies that $\mbox{span}\left(\mathbf{B}^H\right)=\mbox{span}\left(\mathbf{U}_{1}\right)$. Therefore, there exists a $M\times M$ invertible matrix $\mathbf{T}$
such that (\ref{eq:Signal Sub Space}) holds.
\end{IEEEproof}

If the conditions of Lemma \ref{lem:trilinear kruskal rank} hold, then we can write
\begin{eqnarray}
\mathbf{A}_{x_{1}} & = & \mathbf{U}_{11}\mathbf{T}^{-1} \nonumber \\
\mathbf{U}_{12} & = & \mathbf{A}_{x_{1}}\mathbf{\Phi}\mathbf{T}=\mathbf{U}_{11}\mathbf{T}^{-1}\boldsymbol{\Phi T}\nonumber \\
\mathbf{U}_{13} & = & \mathbf{A}_{x_{1}}\mathbf{\Psi}^{H}\mathbf{T}=\mathbf{U}_{11}\mathbf{T}^{-1}\mathbf{\Psi}^{H}\mathbf{T}. \nonumber 
\end{eqnarray}
Besides, since $\mathbf{U}_{1i}$ is of size $(N-1) \times M$, with $N>M$, the number of its rows is greater or equal to the number of its columns. In addition, from (\ref{eq:Signal Sub Space}), $\mathbf{U}_{11}=\mathbf{A}_{x_1}\mathbf{T}$, where $\bf T$ is invertible and $\text{rank}(\mathbf{A}_{x_1})=M$. Therefore, it holds that $\text{rank}(\mathbf{U}_{11})=M$ and $\mathbf{U}_{11}^{\dagger} \mathbf{U}_{11}=\mathbf{I}$.

We can then derive a relation between $\bf \Phi$, $\bf \Psi$ and the blocks that compose $\mathbf{U}_1$ as
\begin{eqnarray}
\mathbf{U}_{11}^{\dagger}\mathbf{U}_{12} & = & \mathbf{T}^{-1}\boldsymbol{\Phi T}\label{eq:Same Permutation} \nonumber \\
\mathbf{U}_{11}^{\dagger}\mathbf{U}_{13} & = & \mathbf{T}^{-1}\mathbf{\Psi}^{H}\mathbf{T},
\end{eqnarray}
where the matrix $\mathbf{T}$ is identical in both equations. We can now obtain $\mathbf{\Phi}$ and $\mathbf{T}$ using an eigenvalue decomposition up to permutation. Denote by $\bf \hat{\Phi}$ and $\bf \hat{T}$ the obtained matrices. Once these are recovered, we compute $\bf \hat{\Psi}$ with the same permutation, as
\begin{equation}
\hat{\mathbf{\Psi}}^H = \hat{\mathbf{T}}\left(\mathbf{U}_{11}^{\dagger}\mathbf{U}_{13}\right)\hat{\mathbf{T}}^{-1}.
\end{equation}
Since the electronic angles $f_i \cos (\theta_i)$ and $f_i \sin (\theta_i)$ are distinct, the eigenvalues of $\hat{\mathbf{\Phi}}$ and $\hat{\mathbf{\Psi}}$ are distinct as well and if follows that both matrices have the same permutation. We thus obtain proper pairing between the diagonal elements, and the AOAs $\theta_i$ and carrier frequencies $f_i$ are given by
\begin{eqnarray}
\theta_{i} & = & \tan^{-1}\left(\frac{\angle\Psi_{ii}}{\angle\Phi_{ii}}\right) \nonumber \\
f_{i} & = & \frac{\angle\Phi_{ii}}{2\pi\frac{d}{c}\cos\left(\theta_{i}\right)}. \label{eq:DOA solution}
\end{eqnarray}
Algorithm \ref{algo:2desprit} summarizes the main steps of the joint 2D ESPRIT described above. In the algorithm description we exploit the 4 cross-correlation matrices between the sub-arrays instead of only 3 as defined in (\ref{eq:cross}) to increase robustness to noise. Here, we assume perfect knowledge of $\mathbf{R}$. In practice, it can be estimated as shown in Algorithm \ref{algo:2desprit}.

\begin{algorithm} 
\textbf{\uline{Input:}}\textbf{ }
\begin{itemize}
\item $Q$ snapshots of the sensors measurements $\mathbf{x}$ along the $x$ axis
\item $Q$ snapshots of the sensors measurements $\mathbf{z}$ along the $z$ axis
\end{itemize}
\textbf{\uline{Output:}}
\begin{itemize}
\item $\hat{\boldsymbol{f}}$ - estimated carriers frequencies
\item $\hat{\boldsymbol{\theta}} $ - estimated the AOA
\end{itemize}
\textbf{\uline{Algorithm:}}
\begin{enumerate}
\item Define $\mathbf{x}_{1}$ and $\mathbf{x}_{2}$ as the first and
last $N-1$ rows of $\mathbf{x}$\\
Define $\mathbf{z}_{1}$ and $\mathbf{z}_{2}$ as the first and
last $N-1$ rows of $\mathbf{z}$
\item Estimate the cross covariance matrices:

\begin{enumerate}
\item $\mathbf{R}_{1}=\sum_{k=1}^Q \mathbf{x}_{1}[k]\mathbf{z}_{1}^H[k]$
\item $\mathbf{R}_{2}=\sum_{k=1}^Q \mathbf{x}_{2}[k]\mathbf{z}_{1}^H[k]$
\item $\mathbf{R}_{3}=\sum_{k=1}^Q \mathbf{x}_{1}[k]\mathbf{z}_{2}^H[k]$
\item $\mathbf{R}_{4}=\sum_{k=1}^Q \mathbf{x}_{2}[k]\mathbf{z}_{2}^H[k]$
\end{enumerate}
\item Decompose $\mathbf{R}=\left[\begin{matrix}\mathbf{R}_{1}^T &
\mathbf{R}_{2}^T &
\mathbf{R}_{3}^T &
\mathbf{R}_{4}^T
\end{matrix}\right]^T$ using SVD: $\mathbf{U,S,V}=\text{svd}(\mathbf{R})$
\item Set $\mathbf{U}_1$ to be the ${(4N-4)\times M}$ matrix that contains the $M$ left eigenvectors corresponding to the highest singular values of $\mathbf{R}$
\item Define:

\begin{enumerate}
\item $\mathbf{U}_{11}$ as the first $N-1$ rows of $\mathbf{U}$
\item $\mathbf{U}_{12}$ as the next $N-1$ rows of $\mathbf{U}$
\item Same for $\mathbf{U}_{13}$, $\mathbf{U}_{14}$
\end{enumerate}
\item Compute:

\begin{enumerate}
\item $\mathbf{V}_{1}=\mathbf{U}_{11}^{\dagger}\mathbf{U}_{12}$
\item $\mathbf{V}_{2}=\mathbf{U}_{11}^{\dagger}\mathbf{U}_{13}$
\item $\mathbf{V}_{3}=\mathbf{U}_{11}^{\dagger}\mathbf{U}_{14}$
\end{enumerate}
\item Perform an eigenvalue decomposition of $\left(\mathbf{V}_{1}+\mathbf{V}_{2}+\mathbf{V}_{3}\right)=\mathbf{T}\boldsymbol{\Lambda}\mathbf{T}^{-1}$, where $\boldsymbol{\Lambda}$ is a diagonal matrix

\item Compute $\mathbf{\hat{\Phi}}=\mathbf{T}^{-1}\mathbf{V}_{1}\mathbf{T}$
and $\mathbf{\hat{\Psi}}=\left(\mathbf{T}^{-1}\mathbf{V}_{2}\mathbf{T}\right)^{H}$
\item Compute the carrier frequencies and AOAs using (\ref{eq:DOA solution})
\end{enumerate}
\protect\caption{Joint ESPRIT}

\label{algo:2desprit}
\end{algorithm}

Finally, Theorem \ref{thm:DOA unique solution} summarizes sufficient conditions for perfect blind reconstruction of $\boldsymbol{f}$ and $\boldsymbol{\theta}$ from the low rate samples $\mathbf{x}[k]$ and $\mathbf{z}[k]$.

\begin{thm}
\label{thm:DOA unique solution}Let $u(t)$ be an arbitrary
signal within $\mathcal{M}_{2}$ and consider an L-shaped ULA with $2N-1$ sensors, such that there are $N$ sensors along each axis with a common sensor at the origin, and the distance between two adjacent sensors is denoted by $d$. If:
\begin{itemize}
\item (c1) $d<\frac{c}{f_{\text{Nyq}}}$
\item (c2) $N>M$,
\end{itemize}
then (\ref{eq:Xeq})-(\ref{eq:Zeq}) has a unique solution $\left(\boldsymbol{f},\boldsymbol{\theta},\mathbf{w}\right)$.\end{thm}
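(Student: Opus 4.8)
The plan is to prove uniqueness constructively, verifying that the joint ESPRIT derivation preceding the theorem recovers $(\boldsymbol{f},\boldsymbol{\theta},\mathbf{w})$ and that each of its steps is determined up to a single joint permutation of the $M$ transmissions, which merely relabels the sources and is therefore immaterial. Since $u(t)\in\mathcal{M}_2$, the transmissions are uncorrelated with $\sigma_i^2\neq0$, so $\mathbf{R}_w$ is diagonal and invertible, which is exactly the hypothesis under which the cross-correlation identities (\ref{eq:cross})-(\ref{eq:R Matrix}) were derived. Conditions (c1)-(c2) then let me invoke Lemma \ref{lem:trilinear kruskal rank}: by the electronic-angle assumption (\ref{eq:distinct_angles}), $\mathbf{A}_{x_1}$ and $\mathbf{A}_{z_1}$ are Vandermonde matrices with distinct columns, hence full column rank, so $\mathbf{R}$ is full column rank and its signal subspace obeys the factorization (\ref{eq:Signal Sub Space}) for some invertible $\mathbf{T}$. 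As $\mbox{span}(\mathbf{U}_1)$ is fixed by the data, the whole problem reduces to extracting $\mathbf{\Phi}$ and $\mathbf{\Psi}$ from $\mathbf{U}_1$ uniquely.

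The core of the argument, which I expect to be the main obstacle, is extracting $\mathbf{\Phi}$ and $\mathbf{\Psi}$ with their diagonal entries correctly paired. Using $\text{rank}(\mathbf{U}_{11})=M$ and $\mathbf{U}_{11}^\dagger\mathbf{U}_{11}=\mathbf{I}$, relation (\ref{eq:Same Permutation}) gives $\mathbf{U}_{11}^\dagger\mathbf{U}_{12}=\mathbf{T}^{-1}\mathbf{\Phi}\mathbf{T}$ and $\mathbf{U}_{11}^\dagger\mathbf{U}_{13}=\mathbf{T}^{-1}\mathbf{\Psi}^H\mathbf{T}$, so both are diagonalized by the single similarity $\mathbf{T}$, with the diagonals of $\mathbf{\Phi}$ and $\mathbf{\Psi}^H$ as eigenvalues. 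The eigenvalues of $\mathbf{U}_{11}^\dagger\mathbf{U}_{12}$ are the entries $e^{j2\pi f_i\tau_1^x(\theta_i)}$; since condition (c1) yields $|2\pi f_i\tau_1^x(\theta_i)|\leq 2\pi\tfrac{f_{\text{Nyq}}}{2}\tfrac{d}{c}<\pi$, the distinct electronic angles $f_i\cos(\theta_i)$ furnished by (\ref{eq:distinct_angles}) map to distinct phases in $(-\pi,\pi)$ and hence to distinct eigenvalues. With simple eigenvalues, the diagonalizing transformation is unique up to a column permutation and a per-eigenvector scaling; applying it to $\mathbf{U}_{11}^\dagger\mathbf{U}_{13}$ then returns $\mathbf{\Psi}^H$ under the \emph{same} permutation, since the scaling is a diagonal factor that commutes with the diagonal matrix it conjugates and therefore cancels. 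This is exactly the mechanism that resolves the classical pairing problem, and it shows $\mathbf{\Phi}$ and $\mathbf{\Psi}$ are recovered uniquely up to one common relabeling.

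Finally, I would read off $(\boldsymbol{f},\boldsymbol{\theta})$ from the paired diagonals via (\ref{eq:DOA solution}). Condition (c1) again keeps $2\pi f_i\tau_1^x(\theta_i)$ and $2\pi f_i\tau_1^z(\theta_i)$ inside $(-\pi,\pi]$, so $\angle\Phi_{ii}=2\pi f_i\tfrac{d}{c}\cos(\theta_i)$ and $\angle\Psi_{ii}=2\pi f_i\tfrac{d}{c}\sin(\theta_i)$ are obtained without phase wraparound; their ratio is $\tan(\theta_i)$, and because $|\theta_i|<90^\circ$ the inverse tangent returns $\theta_i$ uniquely, after which $f_i$ follows unambiguously since $\cos(\theta_i)\neq0$. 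Once $(\boldsymbol{f},\boldsymbol{\theta})$ are fixed, $\mathbf{A}_x$ and $\mathbf{A}_z$ are determined; the stacked steering matrix $\mathbf{A}=[\mathbf{A}_x^T\ \mathbf{A}_z^T]^T$ is full column rank, so $\mathbf{W}(f)=\mathbf{A}^\dagger[\mathbf{X}(f)^T\ \mathbf{Z}(f)^T]^T$ is the unique solution of the stacked system (\ref{eq:Xeq})-(\ref{eq:Zeq}), and equivalently $\mathbf{w}[k]$ follows from (\ref{eq:xVsw})-(\ref{eq:zVsw}). Chaining these steps shows $(\boldsymbol{f},\boldsymbol{\theta},\mathbf{w})$ is unique up to the common permutation, completing the proof.
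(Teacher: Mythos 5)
Your proposal is correct and takes essentially the same route as the paper's own proof: invoke Lemma~\ref{lem:trilinear kruskal rank} under (c1)--(c2) to obtain the signal-subspace factorization (\ref{eq:Signal Sub Space}), use the distinctness of the electronic angles to get simple eigenvalues and hence a single common permutation for $\mathbf{\Phi}$ and $\mathbf{\Psi}$, and use (c1) to keep the phases $2\pi f_i\frac{d}{c}\cos(\theta_i)$ and $2\pi f_i\frac{d}{c}\sin(\theta_i)$ inside $(-\pi,\pi]$ so that (\ref{eq:DOA solution}) determines $\boldsymbol{f}$ and $\boldsymbol{\theta}$ uniquely. You merely spell out two steps the paper leaves implicit --- the cancellation of the per-eigenvector scaling that guarantees the same permutation for both matrices, and the final recovery of $\mathbf{w}$ from the full-column-rank stacked steering matrix --- both of which are correct.
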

\begin{IEEEproof}
From Lemma \ref{lem:trilinear kruskal rank}, it follows that, under conditions (c1)-(c2), $\mathbf{U}_{11}$ is full column rank and thus left invertible. Therefore, $\mathbf{\Phi}$ and $\mathbf{\Psi}$ can be uniquely derived from (\ref{eq:Same Permutation}), with the same permutation $\mathbf{T}$ for both matrices. This follows from the assumption that the electronic angles and, as a consequence the eigenvalues of $\mathbf{\Phi}$ and $\mathbf{\Psi}$, are distinct. Condition (c1) implies
that both $2\pi\hat{f}_{i}\frac{d}{c}\cos\left(\hat{\theta}_{i}\right)\in(-\pi,\pi]$
and $2\pi\hat{f}_{i}\frac{d}{c}\sin\left(\hat{\theta}_{i}\right)\in(-\pi,\pi]$
namely $\angle\Psi_{i,i}$ and $\angle\Phi_{i,i}$, for $1 \leq i \leq M$, are unique, and it follows
that $\boldsymbol{f},\boldsymbol{\theta}$ are unique
as well and are given by (\ref{eq:DOA solution}).
\end{IEEEproof}

In addition, if conditions (c1)-(c2) from Theorem \ref{prop:signals-recovery:-given} hold, then $s_i(t)$ is uniquely recovered from $\mathbf{x}[k]$ and $\mathbf{z}[k]$ from (\ref{eq:rec_sig}) and (\ref{eq:sVsw}) with the observation matrix $\mathbf{A}= \left[ \begin{array}{l} \mathbf{A}_x \\ \mathbf{A}_z \end{array} \right]$.

From Theorem \ref{thm:DOA unique solution}, the minimal necessary number of sensors in each axis, including a common sensor at the origin, to allow perfect blind reconstruction is $N\geq M+1$, leading to a total number of sensors $2N-1 \geq 2M +1$ sensors in each axis, including a common sensor at the origin. 
In addition, for perfect reconstruction we require $f_{s}\geq B$ as in Theorem \ref{prop:signals-recovery:-given}. Thus, the minimal sampling rate is bounded by $(2M+1)B$.

\subsection{CS Approach \label{sub:Compressed-Sensing-Approach}}
In this section, we derive a second joint carrier frequency and AOA recovery approach based on CS methods. Denote 
\begin{equation}
\mathbf{v}[k] = \left[ \begin{matrix} \mathbf{x}[k] \\ \mathbf{z}[k] \end{matrix} \right]
\end{equation}
that stacks the samples from sensors of both axis. Consider the correlation matrix
\begin{equation}
\mathbf{R}=\mathbb{E}\left[\mathbf{v}[k] \mathbf{v}^{H}[k]\right]=\mathbf{A}\mathbf{R}_{w}\mathbf{A}^{H},\label{eq:rvsp}
\end{equation}
where $\mathbf{A} = \left[ \begin{matrix} \mathbf{A}_x \\ \mathbf{A}_z \end{matrix} \right]$.
In the following, we assume perfect knowledge of $\mathbf{R}$.
In practice, it can be estimated as
\begin{equation}
\mathbf{R}=\sum_{k=1}^{Q}\mathbf{v}[k]\mathbf{v}[k]^{H},\label{eq:r_est}
\end{equation}
where $Q$ is the number of snapshots.

Denote $\alpha_{i}=f_{i}\cos\theta_{i}$ and $\beta_{i}=f_{i}\sin\theta_{i}$
and suppose that $\alpha_{i}$ and $\beta_{i}$ lie on a grid $\{\delta l\}_{l=-L_0}^{L_0}$,
with $L_0=\frac{f_{\text{Nyq}}}{2\delta}$. Here, $\delta$ is a parameter
of the recovery algorithm that defines the grid resolution. With high probability, the discretization conserves the unambiguous property, namely $\text{spark}(\mathbf{G})=N+1$. Formulating concrete conditions to ensure the lack of ambiguity is very involved and thus this property is traditionally assumed without justification \cite{multiESPRIT}.

Denote $L=2L_0+1$. We can then write
\begin{equation}
\mathbf{R}=\mathbf{G}\mathbf{R}_w^g\mathbf{G}^{H},\label{eq:rvsp2}
\end{equation}
where ${\bf G}$ is a $(2N-1)\times L^2$ matrix
with $(n,l)$th element $G_{nl}=e^{j2\pi\frac{dn}{c}\alpha_{l_1}}$, for $0 \leq n \leq N-1$ and $G_{nl}=e^{j2\pi\frac{d (n-N+1)}{c} \beta_{l_2} }$, for $N \leq n \leq 2N-1$. Here, $l_1=(l \mod L)-L_0$ and $l_2=\lfloor \frac{l}{L} \rfloor -L_0$. The nonzero elements of the $L^2 \times L^2$ matrix $\mathbf{R}_w^g$ are the $M$ diagonal elements of $\mathbf{R}_{w}$
at the $M$ indices corresponding to $\{\alpha_{i},\beta_{i}\}$.
Since $\mathbf{R}_w^g$ is diagonal, the observation model (\ref{eq:rvsp2}) can be equivalently written in vector form as
\begin{equation}
\label{eq:vecCS2}
\text{vec}(\mathbf{R})= \left( \bar{\mathbf{G}} \odot \mathbf{G} \right) \mathbf{r}_w^g.
\end{equation}
Here $\text{vec}(\mathbf{R})$ is a column vector that vectorizes the matrix $\bf R$ by stacking its columns, $\mathbf{r}_w^g$ is the $L^2 \times 1$ vector that contains the diagonal of $\mathbf{R}_w^g$ and $\odot$ denotes the Khatri-Rao product. 
The goal is thus to recover the $M$-sparse vector $\mathbf{r}_w^g$ from the $(2N-1)^2$
measurement vector $\text{vec}(\mathbf{R})$. 

The following theorem derives a necessary condition on the minimal
number of sensors $2N-1$ for perfect recovery of $\alpha_{i},\beta_{i}$,
$i\in\{1,\dots,M\}$ in a noiseless environment.
\begin{thm}
Let $u(t)$ be an arbitrary signal within $\mathcal{M}_{2}$ and consider an L-shaped ULA with $2N-1$ sensors, such that there are $N$ sensors along each axis with a common sensor at the origin, and the distance between two adjacent sensors is denoted by $d$. If:
\begin{itemize}
\item (c1) $d<\frac{c}{f_{\text{Nyq}}}$
\item (c2) $N>M$
\item (c3) $\text{spark}(\mathbf{G})=N+1$,
\end{itemize}
then (\ref{eq:vecCS2}) has a unique $M$-sparse solution $\mathbf{r}_w^g$.
\end{thm}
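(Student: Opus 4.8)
The plan is to reduce the statement to a single spark condition on the effective sensing matrix $\bar{\mathbf{G}} \odot \mathbf{G}$ and then to lower-bound that spark using the structure of the Khatri--Rao product. I would begin from the standard compressed sensing fact that a linear system $\text{vec}(\mathbf{R}) = \mathbf{\Phi}\,\mathbf{r}$ which admits an $M$-sparse solution has that solution as its \emph{unique} $M$-sparse solution if and only if $\text{spark}(\mathbf{\Phi}) > 2M$; indeed, if $\mathbf{r},\mathbf{r}'$ were two $M$-sparse solutions then $\mathbf{r}-\mathbf{r}'$ would be an at-most-$2M$-sparse vector in $\ker(\mathbf{\Phi})$, which $\text{spark}(\mathbf{\Phi})>2M$ forbids. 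Here $\mathbf{\Phi} = \bar{\mathbf{G}} \odot \mathbf{G}$, so the whole theorem comes down to establishing $\text{spark}(\bar{\mathbf{G}} \odot \mathbf{G}) > 2M$ under (c1)--(c3).

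First I would translate hypothesis (c3) into a statement about Kruskal rank. Writing $\mathrm{kr}(\cdot)$ for the Kruskal rank (the largest $k$ such that every $k$ columns are linearly independent), one has $\text{spark}(\mathbf{G}) = \mathrm{kr}(\mathbf{G}) + 1$, so (c3) gives $\mathrm{kr}(\mathbf{G}) = N$. Since complex conjugation sends any dependence $\sum_l c_l \mathbf{g}_l = \mathbf{0}$ to $\sum_l \bar{c}_l \bar{\mathbf{g}}_l = \mathbf{0}$, the columns of $\bar{\mathbf{G}}$ share the exact dependence structure of those of $\mathbf{G}$, whence $\mathrm{kr}(\bar{\mathbf{G}}) = \mathrm{kr}(\mathbf{G}) = N$.

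The key step is the Khatri--Rao Kruskal-rank inequality: for two matrices with the same number of columns, $\mathrm{kr}(\bar{\mathbf{G}} \odot \mathbf{G}) \geq \min\{\mathrm{kr}(\bar{\mathbf{G}}) + \mathrm{kr}(\mathbf{G}) - 1,\, L^2\}$. In the undersampled regime of interest the grid is far finer than the array, so $L^2 \geq 2N-1$ and the bound reads $\mathrm{kr}(\bar{\mathbf{G}} \odot \mathbf{G}) \geq 2N-1$, i.e. $\text{spark}(\bar{\mathbf{G}} \odot \mathbf{G}) \geq 2N$. Condition (c2), $N > M$, then gives $2N \geq 2M+2 > 2M$, so $\text{spark}(\bar{\mathbf{G}} \odot \mathbf{G}) > 2M$ and the $M$-sparse solution of (\ref{eq:vecCS2}) is unique. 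Condition (c1) enters to keep each phase $2\pi\frac{dn}{c}\alpha_l$ and $2\pi\frac{dn}{c}\beta_l$ within one unambiguous period $(-\pi,\pi]$, so that distinct grid points yield distinct columns and the recovered support of $\mathbf{r}_w^g$ identifies $\{\alpha_i,\beta_i\}$, and hence $\{f_i,\theta_i\}$, without spatial folding.

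I expect the Khatri--Rao Kruskal-rank bound to be the crux of the argument: it is precisely what converts the single-array spark hypothesis (c3) into the doubled spark $2N$ required for sparse recovery, effectively realizing the difference co-array gain that lets $N > M$ suffice rather than the $N > 2M$ one would need without the vectorized correlation formulation. The two points requiring care are verifying that the conjugated factor $\bar{\mathbf{G}}$ indeed carries the same Kruskal rank as $\mathbf{G}$, and confirming that $L^2 \geq 2N-1$ so that the minimum in the inequality is attained by the $2N-1$ term and not capped by the number of grid columns.
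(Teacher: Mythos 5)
Your proposal is correct and follows essentially the same route as the paper: reduce uniqueness to $\text{spark}\left(\bar{\mathbf{G}} \odot \mathbf{G}\right) > 2M$, lower-bound that spark via the Khatri--Rao spark/Kruskal-rank inequality to obtain $2\left(\text{spark}(\mathbf{G})-1\right) = 2N > 2M$ under (c2)--(c3), and note that $L^2$ is large enough that the minimum is not capped by the number of columns. Your restatement in Kruskal-rank language, together with the explicit check that conjugation preserves the dependence structure of the columns, is just an unpacked form of the bound the paper cites directly.
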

\begin{proof}
In order to recover the $M$-sparse vector $\mathbf{r}_w^g$ from $\text{vec}(\mathbf{R})$, we require \cite{CSBook, SamplingBook}
\begin{equation}
\label{eq:spark_kr}
\text{spark} \left( \bar{\mathbf{G}} \odot \mathbf{G} \right) > 2M.
\end{equation}
From \cite{KR_spark}, it holds that
\begin{equation}
\text{spark} \left( \bar{\mathbf{G}} \odot \mathbf{G} \right) \geq \min \{ 2(\text{spark} (\mathbf{G}) -1), L^2+1 \}.
\end{equation}
Combining (c2) and (c3), we have 
\begin{equation}
2(\text{spark} (\mathbf{G}) -1) > 2M.
\end{equation}
Finally, since $L^2 \gg N$, (\ref{eq:spark_kr}) holds.
\end{proof}

To recover the sparse vector $\mathbf{r}_w^g$, we can use any CS recovery algorithm such as orthogonal matching pursuit (OMP) \cite{CSBook}.
Once the indices $\alpha_{i},\beta_{i}$, $i\in\{1,\dots,M\}$
are recovered, the corresponding $f_{i}$ and $\theta_{i}$ are given
by
\begin{eqnarray}
\hat{\theta}_{i} & = & \tan^{-1}\left(\frac{\beta_{i}}{\alpha_{i}}\right),\nonumber \\
\hat{f}_{i} & = & \frac{\alpha_{i}}{\cos(\theta_{i})}.
\end{eqnarray}

\subsection{Numerical Results}

In this section, we demonstrate the effect of different system parameters
on the reconstruction performance. Consider a complex-valued signal
$u\left(t\right)$ from $\mathcal{M}_{2}$, which is the sum of $M=3$
narrowband source signals $s_{i}\left(t\right)$,
each of width $B=50$Mhz and with $f_{\text{Nyq}}=10$Ghz. The carrier
frequencies $f_{i}$ are drawn uniformly at random from $[-\frac{f_{\text{Nyq}}-B}{2},\frac{f_{\text{Nyq}}-B}{2}]$,
and the AOAs $\theta_{i}$ are drawn uniformly at random from $[-85^{\circ},85^{\circ}]$.
The L-shaped array is composed of $2N-1$ sensors; $N$ along each axis with a common sensor at the origin. The received signal at
each sensor is corrupted with AWGN. The mixing and sampling rates
are set to $f_{s}=f_{p}=1.4B$.

In this section, we compare 3
reconstruction methods: 1) Joint ESPRIT summarized in Algorithm \ref{algo:2desprit}, 2) CS approach presented in Section \ref{sub:Compressed-Sensing-Approach}, 3) PARAFAC analysis \cite{Hars94} based approach, as presented in \cite{Stein2015}. PARAFAC \cite{Hars94} extends the bilinear model of factor analysis to a trilinear model using the alternating least squares (ALS) method. In \cite{Stein2015}, PARAFAC is used to decompose the cross correlations matrices defined in (\ref{eq:cross}) into three matrices, isolating $\bf \Phi$ and $\bf \Psi$. To apply the PARAFAC algorithm we use the COMFAC MATLAB function implemented
by \cite{Sidi98}.

In these simulations, we focus on the recovery of the carrier frequencies
$f_{i}$ and AOAs $\theta_{i}$. Once these are recovered, full signal
reconstruction can be performed as shown in the first part of this
work (see (\ref{eq:rec_sig}) and (\ref{eq:sVsw})). The reconstruction performance is measured by the following
criteria: $\frac{1}{Mf_{\text{Nyq}}}\sum_{i=1}^{m}\left|f_{i}-\hat{f}_{i}\right|$
for the frequencies, and $\frac{1}{M180^{\circ}}\sum_{i=1}^{m}\left|\theta_{i}-\hat{\theta}_{i}\right|$
for the AOA.

The first simulation examines the recovery performance with respect
to the number of sensors $2N-1$. Figure \ref{DOA sensors} presents the carrier frequency
and AOA reconstruction performance for different values of the number
of sensors, which affects both the noise averaging and the total amount of samples
available.

\begin{figure}[H]
\includegraphics[width = 0.5\textwidth]{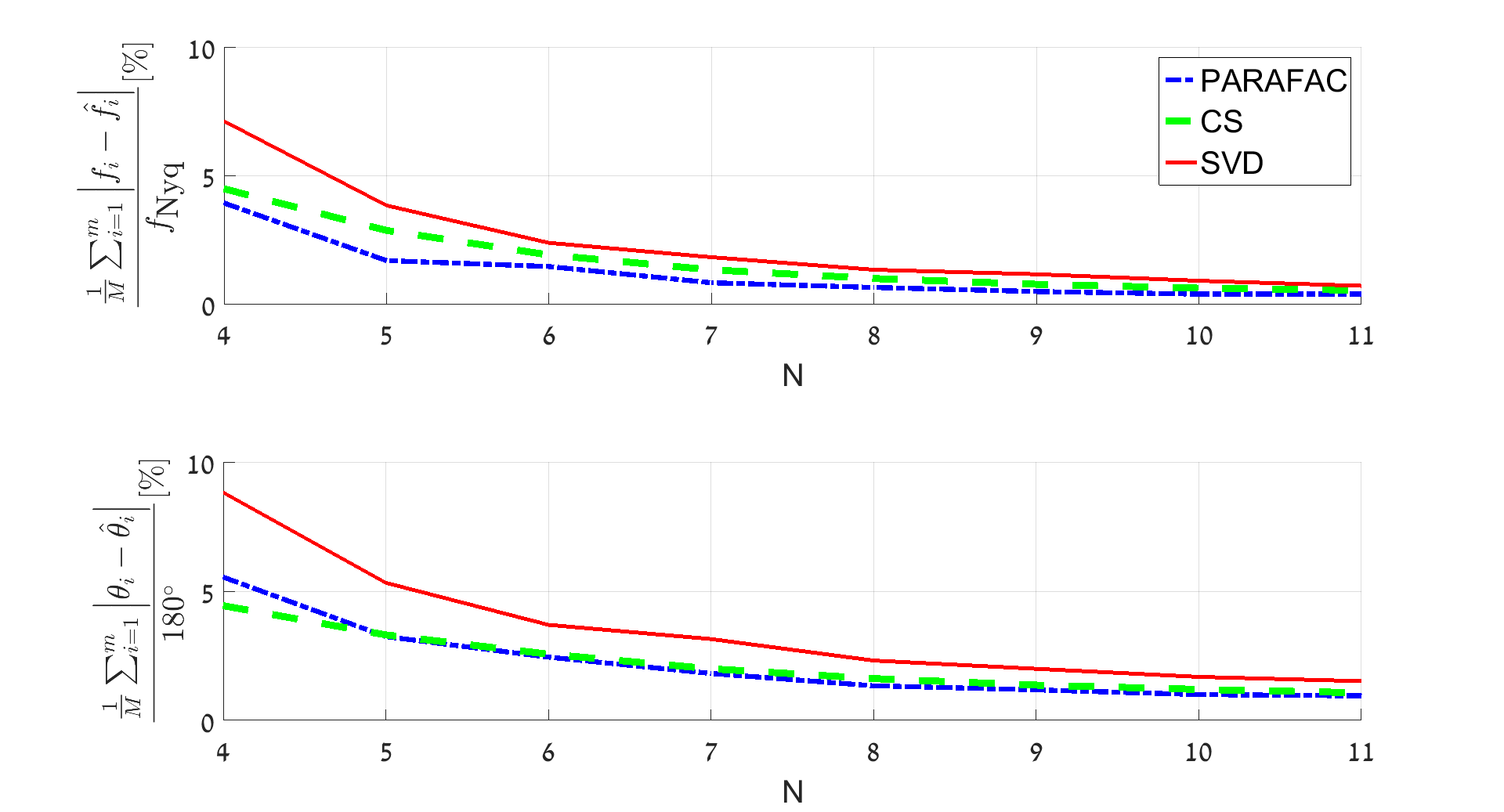}
\protect\caption{Influence of the number of sensors $2N-1$ on the carrier frequency and AOA reconstruction performance, with $M=3$, $Q=300$,
$\text{SNR}=10$dB. \label{DOA sensors}}
\end{figure}

The second simulation, presented in Fig. \ref{DOA SNR}, illustrates
the impact of SNR on the recovery performance.

\begin{figure}[H]
\includegraphics[width = 0.5\textwidth]{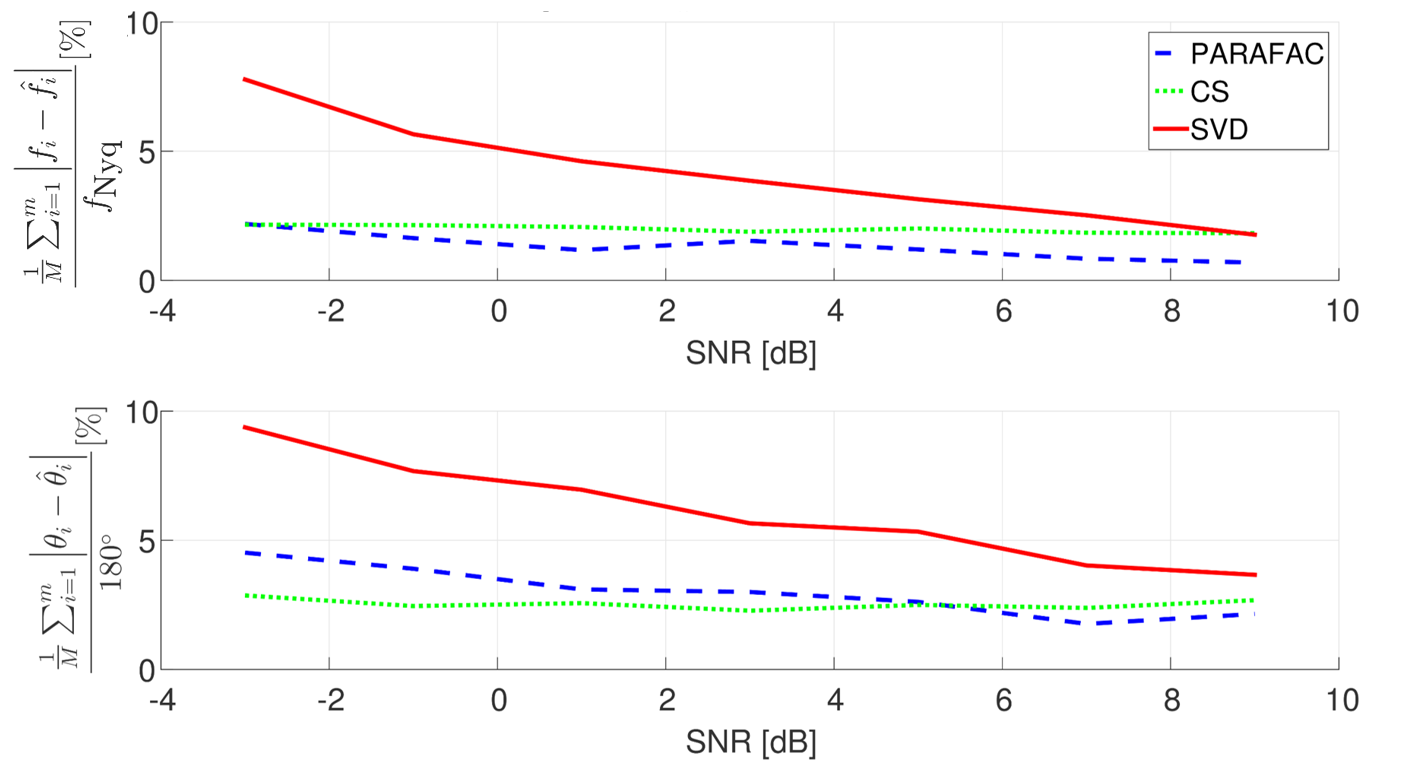}
\protect\caption{Influence of SNR on the carrier frequency and AOA reconstruction performance, with $M=3$, $2N-1=11$, $Q=300$. \label{DOA SNR}}
\end{figure}

\section{Conclusion}
In this paper, we considered two scenarios: spectrum sensing and joint spectrum sensing and DOA of multiband signals from sub-Nyquist samples. For the first scenario, we proposed a receiver composed of a ULA, where each sensor contains an analog front-end equivalent to one channel of the MWC. This system constitutes an alternative sub-Nyquist sampling scheme that outperforms the MWC in terms of performance in low SNRs or implementation complexity. For the joint spectrum sensing and DOA scenario, we extend our ULA configuration and present the CasCADE system, an L-shaped array composed of two ULAs with the same sampling scheme as above. In both cases, we derive sufficient conditions for the recovery of the transmissions carrier frequencies and AOAs, if relevant. We showed that the minimal number of sensors for the first scenario is twice the number of transmissions, namely $2M$, in the worst case and $M+1$ with high probability, whereas in the second scenario, it is $2M+1$ in the average case. Last, we provided two reconstruction schemes for both scenarios: one based on the analytic method ESPRIT and the second based on CS techniques. Simulations demonstrated the performance of the above algorithms in comparison with existing methods.

\bibliographystyle{IEEEtran}
\bibliography{IEEEabrv,IEEEfull}

\begin{thebibliography}{10}
\providecommand{\url}[1]{#1}
\csname url@samestyle\endcsname
\providecommand{\newblock}{\relax}
\providecommand{\bibinfo}[2]{#2}
\providecommand{\BIBentrySTDinterwordspacing}{\spaceskip=0pt\relax}
\providecommand{\BIBentryALTinterwordstretchfactor}{4}
\providecommand{\BIBentryALTinterwordspacing}{\spaceskip=\fontdimen2\font plus
\BIBentryALTinterwordstretchfactor\fontdimen3\font minus
  \fontdimen4\font\relax}
\providecommand{\BIBforeignlanguage}[2]{{%
\expandafter\ifx\csname l@#1\endcsname\relax
\typeout{** WARNING: IEEEtran.bst: No hyphenation pattern has been}%
\typeout{** loaded for the language `#1'. Using the pattern for}%
\typeout{** the default language instead.}%
\else
\language=\csname l@#1\endcsname
\fi
#2}}
\providecommand{\BIBdecl}{\relax}
\BIBdecl

\bibitem{Urkowitz_energy}
H.~Urkowitz, ``Energy detection of unknown deterministic signals,''
  \emph{Proceedings of the IEEE}, vol.~55, pp. 523--531, Apr. 1967.

\bibitem{MF1}
D.~{O. North}, ``An analysis of the factors which determine signal/noise
  discrimination in pulsed carrier systems,'' \emph{Proceedings of the IEEE},
  vol.~51, pp. 1016--1027, Jul. 1963.

\bibitem{MF2}
G.~{L. Turin}, ``An introduction to matched filters,'' \emph{IRE Trans. on
  Information Theory}, vol.~6, pp. 311--329, Jun. 1960.

\bibitem{Gardner_review}
W.~A. Gardner, A.~Napolitano, and L.~Paura, ``Cyclostationarity: Half a century
  of research,'' \emph{Signal Process.}, vol.~86, pp. 639--697, 2006.

\bibitem{Napo_review}
A.~Napolitano, ``Cyclostationarity: New trends and applications,'' \emph{Signal
  Process.}, vol. 120, pp. 385--408, Mar. 2016.

\bibitem{pisarenko73}
V.~{F. Pisarenko}, ``The retrieval of harmonics from a covariance function,''
  \emph{Geophysics, J. Roy. Astron. Soc.}, vol.~33, pp. 347--366, 1973.

\bibitem{schmidt86}
R.~{O. Schmidt}, ``Multiple emitter location and signal parameter estimation,''
  \emph{IEEE Trans. Antennas Propagation}, vol.~34, pp. 276--280, Mar. 1986.

\bibitem{PAUL1986}
R.~{Roy} and T.~{Kailath}, ``{ESPRIT}-estimation of signal parameters via
  rotational invariance techniques,'' \emph{IEEE Trans. on Signal Process.},
  vol.~37, pp. 984--995, Jul. 1989.

\bibitem{Mitola}
J.~Mitola, ``Software radios: Survey, critical evaluation and future
  directions,'' \emph{IEEE Aerosp. Electron. Syst. Mag}, vol.~8, pp. 25--36,
  Apr. 1993.

\bibitem{MitolaMag}
J.~Mitola and C.~{Q. Maguire Jr.}, ``Cognitive radio: Making software radios
  more personal,'' \emph{IEEE Personal Commun.}, vol.~6, pp. 13--18, Aug. 1999.

\bibitem{Study1}
FCC, ``Spectrum policy task force report: Federal communications commission,
  tech. rep. 02-135. [online],''
  \emph{http://www.gov/edocs\_public/attachmatch/DOC\-228542A1.pdf}, Nov. 2002.

\bibitem{Study2}
M.~McHenry, ``{NSF} spectrum occupancy measurements project summary. shared
  spectrum co., tech. rep. [online],'' \emph{http://www.sharedspectrum.com},
  Aug. 2005.

\bibitem{Haykin}
S.~Haykin, ``Cognitive radio: Brain-empowered wireless communications,''
  \emph{IEEE J. Select. Areas Commun.}, vol.~23, pp. 201--220, Feb. 2005.

\bibitem{cog}
A.~{Ghasemi} and E.~{S. Sousa}, ``Spectrum sensing in cognitive radio networks:
  requirements, challenges and design trade-offs,'' \emph{IEEE Commun.
  Magazine}, vol.~46, pp. 32--39, Apr. 2008.

\bibitem{MagazineMishali}
M.~{Mishali} and Y.~{C. Eldar}, ``Sub-{N}yquist sampling: Bridging theory and
  practice,'' \emph{IEEE Signal Process. Magazine}, vol.~28, pp. 98--124, Nov.
  2011.

\bibitem{cognitive1}
E.~G. Larsson and M.~Skoglund, ``Cognitive radio in a frequency-planned
  environment: Some basic limits,'' \emph{IEEE Trans. Wireless Commun.},
  vol.~7, pp. 4800--4806, Dec. 2008.

\bibitem{cognitive2}
A.~Sahai, N.~Hoven, and R.~Tandra, ``Some fundamental limits on cognitive
  radio,'' \emph{Proc. 42nd Annu. Allerton Conf. Commun., Control, and
  Computing}, pp. 1662--1671, Oct. 2004.

\bibitem{Jgu07}
J.-F. Gu and P.~Wei, ``Joint {SVD} of two cross-correlation matrices to achieve
  automatic pairing in {2-D} angle estimation problems,'' \emph{IEEE Antennas
  and Wireless Propagation Letters}, vol.~6, 2007.

\bibitem{Jgu15}
J.-F. Gu, W.-P. Zhu, and M.~{N. S. Swamy}, ``Joint {2-D} {DOA} estimation via
  sparse {L}-shaped array,'' \emph{IEEE Trans. on Signal Process.}, vol.~63,
  Mar. 2015.

\bibitem{vanderveen1}
A.~{N. Lemma}, A.~{van der Veen}, and E.~{F. Deprettere}, ``Joint
  angle-frequency estimation using multi-resolution {ESPRIT} estimation,''
  \emph{IEEE ICASSP}, pp. 1957--1960, May 1998.

\bibitem{vanderveen2}
------, ``Analysis of joint angle-frequency estimation using {ESPRIT},''
  \emph{{IEEE} Journal of Sel. Topics in Signal Process.}, vol.~51, pp.
  1264--1283, May 2003.

\bibitem{bai}
W.~Xudong, X.~Zhang, J.~Li, and J.~Bai, ``Improved {ESPRIT} method for joint
  direction-of-arrival and frequency estimation using multiple-delay output,''
  \emph{Int. Journal of Antennas and Propagation}, Jul. 2012.

\bibitem{Mishali09}
M.~{Mishali} and Y.~{C. Eldar}, ``Blind multiband signal reconstruction:
  Compressed sensing for analog signals,'' \emph{IEEE Trans. on Signal
  Process.}, vol.~57, no.~3, pp. 993--1009, Mar. 2009.

\bibitem{Mishali10}
------, ``From theory to practice: Sub-{N}yquist sampling of sparse wideband
  analog signals,'' \emph{{IEEE} Journal of Sel. Topics in Signal Process.},
  vol.~4, no.~2, pp. 375--391, Apr. 2010.

\bibitem{leus_DOA}
D.~{D. Ariananda} and G.~Leus, ``Compressive joint angular-frequency power
  spectrum estimation,'' \emph{IEEE EUSIPCO}, pp. 1--5, Sep. 2013.

\bibitem{kumar_doa}
A.~{A. Kumar}, S.~{G. Razul}, and C.~{S. See}, ``An efficient sub-{N}yquist
  receiver architecture for spectrum blind reconstruction and direction of
  arrival estimation,'' \emph{IEEE ICASSP}, pp. 6781--6785, May 2014.

\bibitem{CSBook}
Y.~{C. Eldar} and G.~Kutyniok, \emph{Compressed Sensing: Theory and
  Applications}.\hskip 1em plus 0.5em minus 0.4em\relax Cambridge University
  Press, 2012.

\bibitem{Hars94}
R.~A. Harshman and M.~E. Lundy, ``{PARAFAC}: Parrallel factor analysis,''
  \emph{computational statistics \& Data Analysis}, vol.~18, no.~1, pp. 39--72,
  1994.

\bibitem{Zhang2011}
Z.~Xiaofei, L.~Jianfeng, and X.~Lingyun, ``Novel two-dimensional doa estimation
  with l-shaped array,'' \emph{EURASIP Journal on Advances in Signal Process.},
  vol. 2011, no.~50, Aug. 2011.

\bibitem{Liu2010}
D.~Liu and J.~Liang, ``L-shaped array-based {2-D} doa estimation using parallel
  factor analysis,'' \emph{World Congress on Intelligent Control and Automation
  (WCICA)}, pp. 6949--6952, Jul. 2010.

\bibitem{Stein2015}
S.~Stein, O.~Yair, D.~Cohen, and Y.~{C. Eldar}, ``Joint spectrum sensing and
  direction of arrival recovery from sub-{N}yquist samples,'' \emph{IEEE
  SPAWC}, pp. 331--335, Jun. 2015.

\bibitem{Landau67}
H.~{J. Landau}, ``Necessary density conditions for sampling and interpolation
  of certain entire functions,'' \emph{Acta Math}, vol. 117, pp. 37--52, Jul.
  1967.

\bibitem{Kfir2010}
K.~Gedalyahu and Y.~C. Eldar, ``Time-delay estimation from low-rate samples: A
  union of subspaces approach,'' \emph{IEEE Trans. on Signal Process.},
  vol.~58, no.~6, pp. 3017--3031, Jun. 2010.

\bibitem{rembo}
M.~{Mishali} and Y.~{C. Eldar}, ``Reduce and boost: Recovering arbitrary sets
  of jointly sparse vectors,'' \emph{IEEE Trans. on Signal Process.}, vol.~56,
  no.~1, pp. 4692--4702, Oct. 2008.

\bibitem{rankCS}
M.~{E. Davies} and Y.~{C. Eldar}, ``Rank awareness in joint sparse recovery,''
  \emph{IEEE Trans. on Inf. Theory}, vol.~58, pp. 1135--1146, Feb. 2012.

\bibitem{multiESPRIT}
R.~Roy, B.~Ottersten, A.~{L. Swindlehurst}, and T.~Kailath, ``Multi invariance
  {ESPRIT},'' \emph{IEEE Trans. on Signal Process.}, vol.~40, pp. 867--881,
  Apr. 1992.

\bibitem{kikuchi}
S.~Kikuchi, H.~Tsuji, and A.~Sano, ``Pair-matching method for estimating 2-{D}
  angle of arrival with a cross-correlation matrix,'' \emph{IEEE Antennas and
  Wireless Propagation Letters}, vol.~5, pp. 867--881, 2006.

\bibitem{SamplingBook}
Y.~{C. Eldar}, \emph{Sampling Theory: Beyond Bandlimited Systems}.\hskip 1em
  plus 0.5em minus 0.4em\relax Cambridge University Press, 2015.

\bibitem{KR_spark}
N.~{D. Sidiropoulos} and R.~Bro, ``On the uniqueness of multilinear
  decomposition of $n$-way arrays,'' \emph{Journal of Chemometrics}, vol.~14,
  pp. 229--239, 2000.

\bibitem{Sidi98}
N.~D. Sidiropoulos, ``{COMFAC}: Matlab code for {LS} fitting of the complex
  {PARAFAC} model in {3-D},'' \emph{\url{http://www.telecom.tuc.gr/~nikos}},
  1998.

\end{thebibliography}

\end{document}